\crefname{equation}{eq.}{eqs.} %
\crefname{enumi}{}{} %
\crefname{icase}{case}{cases}
\crefname{ipart}{part}{parts}
\crefname{iprop}{property}{properties}
\crefname{iinv}{invariant}{invariants}
\newcommand{\N}{\mathbb{N}}
\newcommand{\fO}{\mathcal{O}}
\newcommand{\Algo}[1]{\mathbb{#1}}
\newcommand{\ceil}[1]{\lceil #1 \rceil}
\newcommand{\floor}[1]{\lfloor #1 \rfloor}
\newcommand{\connected}[1]{\def\temp{#1}\ifx\temp\empty\sim\else\overset{#1}{\sim}\fi}
\DeclareMathOperator{\Val}{Val} %
\DeclareMathOperator{\DCC}{DCC} %
\renewcommand{\complement}{\mathrm{c}}
\tikzset{
	point/.style={circle, fill, inner sep=1.5pt},
	smallpoint/.style={point, inner sep=1.2pt},
	tinypoint/.style={point, inner sep=1pt},
	hlbox/.style={fill, {white!90!black}},
	subrect/.style={draw, fill={white!80!cyan}}, %
	msrect/.style=subrect %
}
\newtheorem{theorem}{Theorem}[section]
\newtheorem{lemma}[theorem]{Lemma}
\newtheorem{proposition}[theorem]{Proposition}
\newtheorem{conjecture}[theorem]{Conjecture}
\newtheorem{observation}[theorem]{Observation}
\newcommand{\PPM}{\textsc{PPM}}
\newcommand{\SeqPPM}{\textsc{SequencePPM}}
\title{Permutation patterns in streams}
\author{Benjamin Aram Berendsohn}
\affil{Max Planck Institute for Informatics}
\date{\vspace{-5ex}}
\begin{document}

\maketitle

\begin{abstract}
Permutation patterns and pattern avoidance are central, well-studied concepts in combinatorics and computer science. Given two permutations $\tau$ and $\pi$, the \emph{pattern matching} problem (PPM) asks whether $\tau$ contains $\pi$.
This problem arises in various contexts in computer science and statistics and has been studied extensively in exact-, parameterized-, approximate-, property-testing- and other formulations.   

In this paper, we study pattern matching in a \emph{streaming setting}, when the input $\tau$ is revealed sequentially, one element at a time. There is extensive work on the space complexity of various statistics in streams of integers. The novelty of our setting is that the input stream is \emph{a permutation}, which allows inferring some information about future inputs. Our algorithms crucially take advantage of this fact, while existing lower bound techniques become difficult to apply. 

We show that the complexity of the problem changes dramatically depending on the pattern~$\pi$. The space requirement is:
\begin{itemize}
\item $\Theta(k\log{n})$ for the monotone patterns $\pi = 12\dots k$, or $\pi = k\dots21$, 
\item $\fO(\sqrt{n\log{n}})$ for $\pi \in \{312,132\}$,
\item $\fO(\sqrt{n} \log n)$ for $\pi \in \{231,213\}$,
\item $\widetilde{\Theta}_{\pi}(n)$ for all other $\pi$.
\end{itemize}
If $\tau$ is an arbitrary sequence of integers (not necessary a permutation), we show that the complexity is $\widetilde{\Theta}_{\pi}(n)$ in all except the first (monotone) cases.
\end{abstract}

\section{Introduction}
Given a permutation $\tau$ of $[n]$, the ``text'', and a permutation $\pi$ of $[k]$, the ``pattern'',
we say that $\tau$ \emph{contains} $\pi$ if there are indices $i_1 < \cdots < i_k$, such that $\tau(i_j) < \tau(i_\ell)$ if and only if $\pi(j) < \pi(\ell)$, for all $j,\ell \in [k]$, i.e., the subsequence $\tau(i_1) \dots \tau(i_k)$ of $\tau$ is \emph{order-isomorphic} to $\pi$. Otherwise we say that $\tau$ \emph{avoids} $\pi$. Notice that the subsequence in question need not appear contiguously in $\tau$.

Permutation patterns are a central concept in combinatorics, whose origins lie partly in computer science, e.g.,  Knuth~\cite{Knuth1968} characterized stack-sortable permutations as those that avoid the pattern 231.\footnote{As common in the literature, we often use a compact notation for patterns, i.e., $231$ instead of $(2,3,1)$.} For an entry point on the extensive literature on permutation patterns, we refer to textbooks~\cite{ kitaev2011patterns,bona2022combinatorics}, a survey~\cite{vatter2014permutation}, and the yearly conference Permutation Patterns~\cite{patt_conf}.

From an algorithmic point of view, perhaps the most natural question is \emph{permutation patterm matching}~(PPM). This problem asks, given $\tau$ and $\pi$, whether $\tau$ contains $\pi$. The study of the complexity of the problem was initiated by Bose, Buss, and Lubiw~\cite{BoseEtAl1998} who showed it to be NP-complete. A~breakthrough result of Guillemot and Marx showed that the problem is fixed-parameter-tractable in terms of the pattern length $k$, i.e., can be solved in running time of the form $f(k)\cdot n$. This has led to the development of the concept of twin-width, with important generalizations from permutations to graphs, and to other structures~\cite{bonnet2021twin, bonnet2021twinperm}.
For larger patterns, the trivial algorithms with running time $\fO(2^n)$ or $\fO(n^k)$ have been improved to $O(1.414^n)$~\cite{GawrychowskiRzepecki2022PPM}, resp., $n^{0.25k + o(1)}$~\cite{BerendsohnEtAl2021CSP}.

The problem has also been extensively studied for various special cases, e.g., for patterns of bounded treewidth~\cite{AhalRabinovich2008,BerendsohnEtAl2021CSP,JelinekOplerEtAl2024PPMMerge}.

Until now, most studies of the PPM problem assumed that we can, on demand, access arbitrary elements of the input permutation. Given the inherently sequential nature of the problem and the possibility of extremely large inputs, it is natural to assume that the elements of the permutation arrive one-by-one, and that the algorithm cannot revisit earlier elements  (unless explicitly storing them). %

This is known as the \emph{streaming} model, and we initiate the study of the PPM problem in this model.
As typical in streaming, the task is to solve PPM with as little space as possible (running time is a secondary concern).
In this paper, we study deterministic algorithms for the streaming version of PPM.
We show that $\widetilde{\Theta}_\pi(n)$ bits of space\footnote{$\widetilde{\fO}$, $\widetilde{\Theta}$, and $\widetilde{\Omega}$ hide polylogarithmic factors, and the subscript $\pi$ hides constant factors depending on $\pi$.} are required for almost all patterns $\pi$.
The only exception are monotone patterns, where $\Theta( k \log n)$ bits of space suffice (using the textbook algorithm for \emph{longest increasing subsequence}), and non-monotone patterns of size three (312, 132, 213, and 231), where we show that the space complexity lies between $\Omega(\log n)$ and $\widetilde{\fO}(\sqrt{n})$.

The definition of pattern containment easily extends to the case where $\tau$ is an arbitrary sequence of distinct comparable elements (typically, integers). In many settings, this makes no difference, since the elements of $\tau$ can be reduced to an order-isomorphic permutation of~$[n]$ by sorting them using $\Theta(n \log n)$ time and space. In the streaming setting, it turns out that even this moderate relaxation has a large effect. We show that PPM requires $\Omega(n)$ bits of space even for non-monotone patterns of size three, when the input $\tau$ is a sequence of distinct integers from $[n]$. Contrast this with the $\widetilde{\fO}(\sqrt{n})$ upper bound when $\tau$ is a permutation of $[n]$.

This stark separation between the two models can be explained by the fact that when the elements of $\tau$ are known to be exactly $[n]$, both the presence and the non-presence of a value at a certain location in the stream gives us some information. Our $\widetilde{\fO}(\sqrt{n})$-space algorithms crucially exploit this idea.
On the other hand, proving lower bounds in this model is hard: Reductions from standard \emph{communication complexity} problems are complicated by the fact that each half of the input is strongly correlated with the other half. We overcome this problem for non-monotone patterns of size at least four with several individual reductions, but we were not able to prove any non-trivial lower bound for any of the patterns 312, 132, 213, 231, and we leave this as an intriguing open question.

We now review some related work on algorithmic problems tied to permutation patterns. Then, we state our results formally.

\paragraph{Related work.}

The sequence variant of PPM has been considered in the non-deterministic model by Jelínek, Opler, and Valtr~\cite{JelinekOplerEtAl2024PPMMerge}, who applied it to the problem of recognizing \emph{merges} of permutations.

Another important recent line of work that is close to ours studies PPM in a \emph{property-testing} framework. Here, the input sequence is assumed to either avoid the pattern $\pi$, or be \emph{$\varepsilon$-far} from avoiding $\pi$, i.e., differ in $\varepsilon n$ elements from any $\pi$-avoiding sequence. The task is to solve PPM by querying only a \emph{sublinear} number of elements (with random access). Both adaptive and nonadaptive variants have been studied~\cite{NewmanRabinovichEtAl2019PropTesting,Ben-EliezerCanonne2018PPMPropTesting,Ben-EliezerLetzterEtAl2022PPMPropTesting,NewmanVarma2024PropTesting,Zhang2024PPMPropTesting}, and other distance models were also considered~\cite{FoxW18}. The special case of $\pi = 21$ is also known as \emph{monotonicity testing}~(see, e.g., Fischer~\cite{Fischer2004PropTesting}).

Another, more difficult problem related to PPM is pattern \emph{counting}, where the number of occurrences of $\pi$ in $\tau$ must be determined. In the usual non-streaming setting, fixed-parameter tractability, or even a runtime of the form $n^{o(k/\log{k})}$ would contradict standard complexity assumptions~\cite{GuillemotMarx2014,BerendsohnEtAl2021CSP} but the above exponential runtimes for PPM largely transfer.

Counting smaller concrete patterns is important in non-parametric statistics: Even-Zohar and Leng~\cite{Even-ZoharLeng2021PPM} observed that Kendall's $\tau$, Spearman's $\rho$, the Bergsma–Dassios–Yanagimoto test, and Hoeffding’s independence test can be inferred from counting patterns of size 2, 3, 4, and 5, respectively. Accordingly, counting concrete small patterns has been the subject of study in both the exact and approximate sense~\cite{Even-ZoharLeng2021PPM,DudekGawrychowski2020PPM4, approx_count}.

Counting the simplest possible pattern $\pi = 21$ yields the number of inversions, a natural measure of sortedness of a sequence. In the streaming setting, counting 21 already requires $\Omega(n)$ space~\cite{AjtaiEtAl2002}, even when the input sequence is a permutation.

\paragraph{Our results.}

Let us repeat the definition of the two problems we study in this paper (we define the model formally in \cref{sec:prelims:streaming}).
The first problem, called $\pi$-$\PPM_n$, consists of determining whether the pattern $\pi$ is contained in an input permutation $\tau$ on $[n]$ that arrives one value at the time. We treat this as a non-uniform problem, that is, an algorithm may hard-code the input size $n$. This makes our lower bounds stronger and simplifies our algorithms somewhat.

The second problem, $\pi$-$\SeqPPM_n$ is similar, except the input now is a sequence of \emph{at most} $n$ distinct values out of $[n]$.
Our lower bounds still work when the input size $m \le n$ is known to the algorithm, though we do not study all possible combinations of parameters $m, n$.

Note that we can store the whole input using $\fO( n \log n)$ bits, which is therefore a trivial upper bound for the space complexity. It is also not hard to show the following weak lower bound (\cref{sec:ppm:lb}).

\begin{restatable}{theorem}{restateEasyLB}\label{p:lb-easy}
	The space complexity of $\pi$-$\PPM_n$ and $\pi$-$\SeqPPM_n$ for a pattern $\pi$ of length at least two is $\Omega(\log n)$ bits.
\end{restatable}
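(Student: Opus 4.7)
The plan is to apply a Myhill–Nerode-style state-counting argument, viewing the streaming algorithm as a deterministic finite automaton with at most $2^s$ internal states over the alphabet $[n]$ that reads inputs of length exactly $n$. The goal is to exhibit $\Omega(n)$ pairwise distinguishable prefixes, forcing $2^s = \Omega(n)$ and hence $s = \Omega(\log n)$.

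First, I fix a $\pi$-avoiding permutation $\sigma^* \in S_n$. Since $|\pi| \ge 2$, the pattern $\pi$ contains at least one ascent or at least one descent, so either the monotone-decreasing permutation $(n, n-1, \ldots, 1)$ or the monotone-increasing permutation $(1, 2, \ldots, n)$ is $\pi$-avoiding; pick one and call it $\sigma^*$.

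Next, let $k = |\pi|$, and for each $i \in \{0, 1, \ldots, n-k\}$ consider the prefix $P_i := (\sigma^*(1), \ldots, \sigma^*(i))$. For each such $i$, one can extend $P_i$ to a length-$n$ permutation containing $\pi$: among the $n - i \ge k$ remaining values, order the smallest $k$ of them in the relative order prescribed by $\pi$ to form a $\pi$-pattern placed immediately after $P_i$, and fill in the rest in arbitrary order. Call this extension $P_i \cdot w_i$.

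Third, I would show that the automaton's state after reading $P_i$ differs from its state after reading $P_j$ whenever $i \ne j$. Assume for contradiction that the states coincide for some $i < j \le n-k$. I would then construct a pair of length-$n$ valid permutation inputs whose state-trajectories agree from time $j$ onwards but whose correct PPM outputs differ: one is $\sigma^*$ itself (output \textsc{no}), the other is obtained by extending $P_i$ with a $\pi$-embedding on the next $j - i$ positions (using the values $\{\sigma^*(i+1), \ldots, \sigma^*(j)\}$ permuted to contain $\pi$) and then the tail $\sigma^*(j+1), \ldots, \sigma^*(n)$. By determinism together with the collapse of the two states at time $j$, the two inputs must end in the same final state and receive the same verdict—contradicting the fact that one contains $\pi$ and the other does not. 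Consequently the automaton has at least $n - k + 1 = \Omega(n)$ states (treating $k = |\pi|$ as a constant), which yields $s = \Omega(\log n)$.

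The main obstacle is precisely this last step: the algorithm is only required to be correct on valid permutation inputs, and so the usual Myhill–Nerode argument does not transfer verbatim. The delicate part is to choose the two distinguishing inputs so that they use the \emph{same} set of values in positions $i+1, \ldots, j$ (namely $\{\sigma^*(i+1), \ldots, \sigma^*(j)\}$) and the same tail in positions $j+1, \ldots, n$, while one arrangement of the middle block produces a $\pi$-pattern and the other reproduces $\sigma^*$ exactly. Verifying that such a construction is always possible reduces to the combinatorial fact that any set of at least $k$ distinct comparable values can be arranged to be order-isomorphic to $\pi$, which is the same observation used above. For $\SeqPPM_n$ the same argument applies with only cosmetic changes, since the richer family of valid inputs only makes the distinguishing construction easier.
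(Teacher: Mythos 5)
Your high-level plan (a Myhill--Nerode-style state count along the monotone permutation $\sigma^*$) is in the same spirit as the paper's proof, but the distinguishing step has a genuine gap, and as written the argument does not go through.

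You claim that all states $q_0,\dots,q_{n-k}$ (the states of the automaton after reading the prefixes $P_0,\dots,P_{n-k}$ of $\sigma^*$) must be pairwise distinct, and attempt to prove this by assuming $q_i=q_j$ with $i<j$ and constructing two inputs: Input~A $=\sigma^*$ and Input~B $= P_i \cdot (\text{middle block permuted to contain }\pi)\cdot (\sigma^*(j{+}1),\dots,\sigma^*(n))$. You then assert that ``by determinism together with the collapse of the two states at time $j$, the two inputs must end in the same final state.'' This is where the argument breaks. Input~A's state at time $j$ is $q_j$, but Input~B's state at time $j$ is $t^*(q_i,\text{middle block})$, the result of running the transition function from $q_i$ on the permuted middle block. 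Nothing in the hypothesis $q_i=q_j$ forces $t^*(q_i,\text{middle block})=q_j$: you know the two runs share a state at time $i$, not at time $j$. So the ``state-trajectories agree from time $j$ onwards'' claim is unjustified, and no contradiction is obtained. In fact, without some extra structure, individual states along the run on $\sigma^*$ are allowed to repeat, so the claim ``$q_i\neq q_j$ for all $i\neq j$'' is simply false in general; it is exactly the kind of statement that the fixed-length permutation constraint defeats.

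The paper's proof of the monotone case (Lemma~\ref{p:lb-mon2}) sidesteps this by counting \emph{pairs of consecutive states} instead of single states. Concretely, for $\pi=21$ and input $12\cdots n$, suppose $q_i=q_j$ \emph{and} $q_{i+1}=q_{j+1}$ for $i\neq j$. Then transposing the two input values at positions $i{+}1$ and $j{+}1$ yields an input whose run collides with the original run at both time $i{+}1$ and time $j{+}1$ (because $t(q_i,j{+}1)=t(q_j,j{+}1)=q_{j+1}=q_{i+1}$ and symmetrically), hence produces the same final state; but the transposed permutation contains a descent, a contradiction. The sequence $q_0,\dots,q_n$ is therefore a Davenport--Schinzel sequence of order two, giving $n+1\le 2m-1$, i.e.\ $m\ge n/2+1$ states. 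The paper then extends from $21$ (and its symmetric versions) to every pattern of length at least two via the append/prepend reduction of Lemma~\ref{p:red}. Your construction of the permuted middle block and the choice to match value-sets are the right instincts, but the missing idea is the shift from single states to consecutive pairs (or some other mechanism that forces the two runs to actually re-synchronize at a common time).
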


We now present our main results. For monotone patterns, a simple modification of the standard \emph{longest increasing subsequence} algorithm allows us to solve both problems with $\fO(k \log n)$ bits of space, and we prove that this is tight up to a constant \emph{independent} of $k$.

\begin{restatable}{theorem}{restateMon}\label{p:compl-mon}
	The space complexity of the problems $\pi$-$\PPM_n$ and $\pi$-$\SeqPPM_n$ for each monotone pattern $\pi$ of length $k \ge 2$ is $\Theta(k \log n)$ bits if $n > k^2$.
\end{restatable}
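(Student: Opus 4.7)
My plan is to establish matching upper and lower bounds separately.

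\textbf{Upper bound.} The textbook patience-sorting algorithm for longest increasing subsequence suffices. Maintain thresholds $t_1 < t_2 < \cdots < t_{k-1}$, where $t_j$ is the smallest last element of any length-$j$ increasing subsequence seen so far. On each arriving value $v$, use binary search to find the largest $j$ with $t_j < v$ and update $t_{j+1} \gets v$; if this would set $t_k$, output Yes. The state consists of $k-1$ values in $[n]$, giving $O(k \log n)$ bits. The pattern $k(k-1)\cdots 21$ is handled symmetrically, and the algorithm works verbatim for $\SeqPPM_n$.

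\textbf{Lower bound for $\SeqPPM_n$.} I construct a fooling set of $2^{\Omega(k \log n)}$ pairwise distinguishable prefixes. Set $m := \lfloor n/(k-1) \rfloor \ge k$ (using $n > k^2$), and partition $[n]$ into consecutive value slots $S_1 < S_2 < \cdots < S_{k-1}$, each of size $m$. The fooling family is $\{\tau_A \mid A \in \prod_j S_j\}$, where $\tau_A := a_1 a_2 \cdots a_{k-1}$ is automatically increasing by the slot structure. Given $A \neq A'$, let $i$ be the largest differing index; WLOG $a_i < a'_i$. The distinguishing suffix is $\sigma = v, u_1, \ldots, u_{k-i-1}$ with $v \in S_i \cap (a_i, a'_i)$ and $v < u_1 < \cdots < u_{k-i-1}$ all in $S_i$ (a mild range-restriction on the $a_j$'s guarantees the needed slack). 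Appending $\sigma$ creates the increasing length-$k$ subsequence $a_1, \ldots, a_i, v, u_1, \ldots, u_{k-i-1}$ for $\tau_A$, whereas for $\tau_{A'}$ the condition $v < a'_i$ blocks this extension and no length-$k$ subsequence exists. Since $\log_2 m^{k-1} \ge (k-1) \cdot \tfrac{1}{2} \log n = \Omega(k \log n)$, the bound follows.

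\textbf{Lower bound for $\PPM_n$.} The main obstacle is that different $\tau_A$'s use different value sets, so the $\SeqPPM_n$ distinguishing suffix is not a valid permutation completion of both prefixes. The plan is to reserve a common ``distinguishing zone'' $W \subseteq [n]$ containing enough representatives in each slot $S_i$ to support the required suffixes, then replace each $\tau_A$ with a full permutation of $[n] \setminus W$ that leaves the algorithm in the same state. Concretely, the extended prefix first streams $[n] \setminus (W \cup \{a_1, \ldots, a_{k-1}\})$ in decreasing order (keeping LIS $= 1$), then streams $a_1, a_2, \ldots, a_{k-1}$ in increasing order; fixing $a_1 = 1$ (with $1 \notin W$) prevents a spurious length-$k$ chain from the decreasing phase, since then $a_1$ resets the minimum-threshold instead of extending it. All prefixes now share the value set $[n] \setminus W$, so any permutation of $W$ is a valid completion. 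For each pair $A \neq A'$, the corresponding $\SeqPPM_n$ distinguishing suffix can be embedded inside $W$ and extended to a full permutation of $W$ by appending the remaining values in a carefully chosen slot-by-slot decreasing order that preserves the Yes/No distinction for the $A'$ case. Fixing $a_1$ loses one coordinate of freedom, leaving $m^{k-2} = 2^{\Omega(k \log n)}$ distinguishable prefixes; the small cases $k \in \{2, 3\}$ are already covered by the easy $\Omega(\log n) = \Omega(k \log n)$ lower bound from \cref{p:lb-easy}.
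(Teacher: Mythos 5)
Your upper bound matches the paper's (the LIS algorithm with early stopping), and your overall lower-bound strategy — build $2^{\Omega(k\log n)}$ prefixes distinguishable by some suffix, fix the first chain element to $1$ to neutralize the decreasing prefix, and reserve a ``distinguishing zone'' so that all prefixes share a common value set — is structurally the same as the paper's communication-complexity construction (there, Alice holds all odd integers and Bob all even integers, and $r_1 = 1$ is fixed for the same reason).

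However, the $\PPM_n$ lower bound has a genuine gap exactly where the paper's proof does its real work. You state that the remaining values of $W$ can be appended ``in a carefully chosen slot-by-slot decreasing order that preserves the Yes/No distinction for the $A'$ case,'' but this is precisely the claim that needs to be proved, and the order matters a great deal. If the completion of $W$ is placed \emph{after} the distinguishing increasing run, any element of $W$ larger than the top of the length-$(k-1)$ chain in $\tau_{A'}$ immediately extends it to length $k$, destroying the NO instance. The paper's construction $\beta = \beta_1\beta_2\beta_3$ is explicit: the large values $\beta_1$ appear \emph{before} the increasing run $\beta_2$ (so they cannot extend chains that end inside $\beta_2$), then $\beta_2$ is a short increasing block keyed to the first differing index $i$, then $\beta_3$ mops up the small values. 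Verifying that $\alpha_2\beta$ has no increasing subsequence of length $k$ is then a short but nontrivial calculation comparing the last value of the $\alpha_2$-part of a hypothetical chain with the first value of its $\beta$-part. Your write-up does not specify the completion of $W$ nor carry out this verification, so the hard direction of the PPM bound is not actually established. (The $\SeqPPM$ fooling-set sketch is also somewhat hand-wavy about the ``mild range-restriction'' needed, but that part is routine; the PPM completion is the real missing piece.)
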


We prove \cref{p:compl-mon} in \cref{sec:ppm:mon}.
For non-monotone patterns, both problems are hard in almost all cases (\cref{sec:ppm:lb}). For $\SeqPPM$, we give a short proof using our machinery.\footnote{Jelínek, Opler, and Valtr~\cite{JelinekOplerEtAl2024PPMMerge} gave another proof based on a preliminary version of this paper.}

\begin{restatable}{theorem}{restateNonMonSeq}\label{p:lb-non-mon-seq}
	The space complexity of $\pi$-$\SeqPPM_n$ for each non-monotone pattern $\pi$ is $\Omega_\pi(n)$ bits.
\end{restatable}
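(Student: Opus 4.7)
The plan is to prove the lower bound via reduction from the hard one-way two-party communication problem $\textsc{Index}_m$, with $m = \Theta_\pi(n)$, to $\pi$-$\SeqPPM_n$. By the standard translation in which Alice executes the streaming algorithm on her half of the stream and transmits the resulting state to Bob, who continues on his half, any $s$-bit streaming algorithm yields an $s$-bit one-way deterministic protocol; since $\textsc{Index}_m$ has one-way deterministic communication complexity $\Omega(m)$, such a reduction would give the desired $\Omega_\pi(n)$ space lower bound.

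First, I would reduce to a small set of base patterns. Every non-monotone permutation contains at least one of the four 3-patterns $132, 213, 231, 312$ as a sub-pattern, and given such a containment $\sigma \subseteq \pi$, I would reduce $\sigma$-$\SeqPPM_n$ to $\pi$-$\SeqPPM_{n + O_\pi(1)}$ by prepending and appending a constant number of sentinel values in value ranges disjoint from those used by the $\sigma$-instance, chosen so that they can serve only the entries of $\pi$ outside the embedded $\sigma$-copy. The padded stream then contains $\pi$ exactly when the original contains $\sigma$. The standard reversal and complement symmetries of permutation patterns, both of which preserve streaming space complexity, then collapse the four 3-patterns into a single representative, say $\pi = 132$.

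The core reduction from $\textsc{Index}_m$ to $132$-$\SeqPPM_{\Theta(m)}$ crucially exploits the defining feature of the $\SeqPPM$ model: the input is a sequence of \emph{distinct} values from $[n]$ that need not include all of $[n]$, so Alice can encode $\Theta(m)$ bits simply by choosing which values to include from a pre-agreed range. The intended design has Alice produce a $132$-free stream that encodes $x$ via value selection, using $m$ disjoint ``slots'' in her range (one per bit), while Bob, using a separate value range, appends a short sequence that completes a $132$-pattern if and only if $x_j = 1$. The main obstacle will be localizing Bob's query to the single bit $x_j$: naive encodings let Bob recover only coarse statistics such as ``is there any $1$ to the right of position $j$,'' which has one-way deterministic communication complexity only $O(\log m)$ and would yield no more than the trivial bound in \cref{p:lb-easy}. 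Overcoming this requires Alice's choice at index $j$ to toggle a $132$-completion that is specifically keyed to Bob's query values and independent of the other bits of $x$—the delicate combinatorial step that the paper's ``machinery'' presumably encapsulates.
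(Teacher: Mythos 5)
Your high-level plan---reduce to a non-monotone length-$3$ pattern, then give a one-way communication-complexity reduction---matches the paper's. Using $\textsc{Index}$ rather than Disjointness is a reasonable alternative for the single-pass bound (the paper uses Disjointness because it additionally yields a multi-pass lower bound; specializing the Disjointness reduction to $|T|=1$ gives precisely the $\textsc{Index}$ reduction you want). However, the padding step has a genuine gap, and the core reduction is left unfinished. For the padding: you cannot in general embed $\sigma$ into $\pi$ by ``prepending and appending a constant number of sentinel values in value ranges disjoint from those used by the $\sigma$-instance.'' A sentinel position of $\pi$ may carry a value that lies \emph{strictly between} two of the $\sigma$-values; the sentinel in the stream must then lie between the corresponding two values of the $\sigma$-occurrence, which are chosen by the adversary, so no fixed constant set of sentinels works for all occurrences. (Take $\pi = 2413$: removing either endpoint leaves a non-monotone $3$-factor, namely $231$ or $312$, yet in both cases the removed endpoint must sit strictly between two of the three remaining values.) The paper avoids this by removing one endpoint at a time, using the observation that a non-monotone pattern of length at least four always has an endpoint whose removal keeps it non-monotone, and at each step appending a structured set of $n$ values, interleaved one per gap after doubling the existing values (\cref{p:red}); this guarantees a value at the required rank for \emph{every} possible occurrence, at the cost of a factor-$2$ blowup per step (not additive $O_\pi(1)$), which is still $\Omega_\pi(n)$ after $|\pi|-3$ steps.

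For the core reduction, you correctly identify the obstacle but do not supply the construction. The paper's solution for $312$ and Disjointness: Alice emits, for each $i\in S$ in ascending order, the width-one decreasing pair $3i,\,3i-2$; Bob emits $3i-1$ for each $i\in T$ in descending order. The only value strictly between $3i-2$ and $3i$ is $3i-1$, so a $312$ arises exactly when some $i\in S\cap T$, and the ascending/descending orders rule out spurious occurrences---this is exactly the keyed toggle your proposal asks for. A small side note: the paper establishes that taking the complement preserves streaming space complexity, but only shows that taking the reverse preserves communication-complexity \emph{hardness} (\cref{p:cc-hard-reverse}); since your argument is itself via communication complexity this suffices, but the blanket claim that reversal preserves streaming space complexity is not established by the paper.
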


Hardness of $\PPM$ (for almost all patterns) is one of our main new results, and significantly harder to prove.

\begin{restatable}{theorem}{restateNonMon}\label{p:lb-non-mon}
	The space complexity of $\pi$-$\PPM_n$ for each non-monotone pattern $\pi$ of length at least four is $\Omega_\pi(n)$ bits.
\end{restatable}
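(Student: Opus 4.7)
The plan is to reduce from the one-way communication problem \textsc{Index}: Alice holds $x \in \{0,1\}^m$ with $m = \Theta_\pi(n)$, Bob holds $i \in [m]$, and determining $x_i$ requires Alice to transmit $\Omega(m)$ bits. Given an $s$-space streaming algorithm for $\pi$-$\PPM_n$, Alice runs it on a prefix of $\tau$ encoding $x$, sends its internal state to Bob, and Bob extends the stream with a suffix encoding $i$; reading whether $\tau$ contains $\pi$ then recovers $x_i$. This forces $s = \Omega_\pi(n)$.

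For each non-monotone $\pi$ of length $k \geq 4$, we construct, for every $(x,i)$, a permutation $\tau$ of $[n]$ that contains $\pi$ iff $x_i = 1$. The central difficulty highlighted in the introduction is that, because $\tau$ must be a permutation, Bob can infer which values Alice used. We address this by \emph{fixing} that set independently of $x$: Alice's prefix consists of $m$ blocks $B_1, \ldots, B_m$ separated by deterministic guard elements, where each $B_j$ draws from a prescribed value range depending only on $j$. The bit $x_j$ is encoded purely in the internal order of $B_j$, which realizes one of two short orderings, only one of which can combine with Bob's suffix to produce $\pi$. Bob's suffix consists of a small number of \emph{probes} whose values are chosen to target the range of $B_i$; together with $B_i$ they form $\pi$ exactly when $B_i$ has the correct ordering, while the guards and block ranges together rule out every cross-block embedding of $\pi$.

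The construction is pattern-dependent and proceeds by case analysis. The complementation symmetry $\pi \mapsto \pi^c$ preserves $\pi$-$\PPM_n$ complexity, since both players can transform each emitted value $v$ into $n+1-v$ on the fly, and this roughly halves the number of distinct length-four non-monotone patterns to handle. For each remaining pattern we choose a split of $\pi$ into a short Alice-part realized by a single block together with a Bob-part realized by the probes, picking concrete block value ranges, guard values, and probe values for which every embedding of $\pi$ into $\tau$ is forced to be the intended one. The assumption $k \geq 4$ is essential: the extra letters provide enough structural freedom for Bob's probes to pin down block $B_i$ while suppressing cross-block embeddings, whereas non-monotone length-three patterns do not admit such a gadget, which is why those cases remain open. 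To extend from $k = 4$ to $k > 4$, we pad: Alice prepends and Bob appends fixed monotone runs of fresh values in disjoint value ranges, so that any length-four $\pi$-witness in the core gadget extends uniquely to a length-$k$ one and no new embeddings of $\pi$ appear. The main obstacle throughout is the per-pattern verification that cross-block and cross-gadget embeddings of $\pi$ are impossible; this dictates the delicate parameter choices and absorbs the bulk of the work, and is where each of the ``several individual reductions'' alluded to in the introduction is spent.
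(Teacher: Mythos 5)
The paper reduces from the two-party set-disjointness problem $\CCDisj$, not $\CCIndex$, and several of its reductions are \emph{two-round}: the stream is assembled as $f_1(S)\,g_1(T)\,f_2(S)$, so Alice's contribution sandwiches Bob's. Your proposal is a strictly one-round reduction (Alice emits a prefix depending on $x$, Bob emits a suffix depending on $i$), and that framework provably cannot yield $\Omega(n)$ for several of the patterns you must cover. Concretely, the one-way two-party communication complexity of $2143$-$\PPM_n$ is $O(\log n)$ for \emph{every} split of $\tau$ into a prefix $\alpha$ and a suffix $\beta$. Writing a $2143$-occurrence at positions $p_1<p_2<p_3<p_4$ with values $v_1,v_2,v_3,v_4$ (so $v_2<v_1<v_4<v_3$), split by how many $p_i$ lie in $\alpha$: if $0$ or $1$, Bob detects it alone; if $3$ or $4$, Alice detects it alone (each knows the other's value set since $\tau$ is a permutation); and in the $2$--$2$ case the occurrence is witnessed by a decreasing pair $(v_1,v_2)$ in $\alpha$ and a decreasing pair $(v_3,v_4)$ in $\beta$ with $v_1<v_4$, so Alice need only send $M=\min\{v_1 : (v_1,v_2)\text{ decreasing pair of }\alpha\}$, and Bob checks whether $\beta$ has a decreasing pair with $v_4>M$. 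The same computation, with $\max$ in place of $\min$ and increasing in place of decreasing pairs where appropriate, gives $O(\log n)$ for $4312$, and hence (via complement and reverse) for $3412$, $1243$, $2134$, $3421$ as well. Any valid one-round reduction from $\CCIndex$ to $\pi$-$\PPM_n$ would let Alice instead send this $O(\log n)$-bit message, yielding an $O(\log n)$-bit one-way protocol for $\CCIndex$; that forces $m=O(\log n)$, so the construction you propose cannot exist for these six patterns with $m=\Theta(n)$. This is exactly why the paper resorts to two-round constructions there, and also why it uses $\CCDisj$ rather than $\CCIndex$: once Bob is allowed to reply, $\CCIndex$ collapses to $O(\log n)$ communication, whereas $\CCDisj$ stays hard with any constant number of rounds.

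For the remaining representatives $4231,4213,4132,4123$ (all with $\pi(1)=4$), a one-round construction does suffice, and your plan essentially coincides with the paper's one-round $\CCDisj$ reduction specialized to a singleton $T=\{i\}$; that part of your proposal is fine modulo the per-pattern verifications. A smaller issue is the padding for $k>4$: ``fixed monotone runs of fresh values in disjoint value ranges'' is not general enough, because the element peeled from one end of $\pi$ can lie in the \emph{middle} of the remaining value range (e.g.\ $24531\mapsto 4531$ removes the second-smallest value), so it cannot be pushed to a disjoint extreme range. The paper instead proves a single-element prepend/append reduction (\cref{p:red}) that first doubles all values to make room anywhere, and applies it iteratively using the fact that every non-monotone pattern of length at least four can shed its first or last element and remain non-monotone.
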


Finally, we show that $\pi$-$\PPM_n$ can be solved using sublinear space for non-monotone patterns of size three, which implies a separation between $\SeqPPM$ and $\PPM$ (\cref{sec:ppm:size-3-ub}).

\begin{restatable}{theorem}{restateNonMonThree}\label{p:ub-non-mon-3}
	The space complexity of $\pi$-$\PPM_n$ is
	\begin{itemize}
		\item $\fO(\sqrt{n\log{n}})$ bits if $\pi \in \{312,132\}$, and
		\item $\fO(\sqrt{n} \log n)$ bits if $\pi \in \{231,213\}$.
	\end{itemize}
\end{restatable}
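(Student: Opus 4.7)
By applying the value-complementation map $v \mapsto n{+}1{-}v$ on the fly --- a free operation in the streaming model --- patterns 132 and 312 are equivalent in space complexity, as are 213 and 231. It therefore suffices to exhibit algorithms for 132 and 213.

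For 132 my approach is to scan the stream left-to-right while maintaining the running prefix-minimum $m$ together with enough information about the seen set $S$ to answer, at each arrival $v$, whether the open interval $(m,v)$ is saturated by $S$ (equivalently, whether all integers strictly between $m$ and $v$ have already been seen). This is the right question: if some integer $c \in (m,v) \cap \overline S$ exists, then $(m,v,c)$ --- with $c$ guaranteed to arrive later since $\tau$ is a permutation --- witnesses 132, while otherwise the prefix extended by $v$ still admits no 132 whose last element lies in the prefix. To answer the saturation query in $\fO(\sqrt{n\log n})$ bits, I partition $[n]$ into $\sqrt{n/\log n}$ blocks of $\sqrt{n\log n}$ values each. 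Per-block seen-counts occupy $\fO(\log n)$ bits each for a total of $\fO(\sqrt{n\log n})$ bits, letting us verify saturation for blocks fully contained in $(m,v)$. For the at-most-two boundary blocks (the one containing $m$ and the one containing $v$), the running-minimum invariant that every value $< m$ is automatically unseen eliminates the need for a bitmap on the $m$-side, and the remaining $v$-side boundary block is handled with a single ``active-block'' bitmap of $\fO(\sqrt{n\log n})$ bits.

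For 213 I mirror this with the invariant $a^\ast := \min\{a \in S : \exists b \in S,\ b < a,\ p(b) > p(a)\}$; an arrival $v > a^\ast$ triggers YES. Updating $a^\ast$ on each arrival requires the \emph{successor} of $v$ in $S$, because appending $v$ to the stream makes every previously-seen value larger than $v$ a fresh witness of a 21-subpattern ending at $v$. I support successor queries over the growing $S$ by partitioning $[n]$ into $\sqrt n$ blocks of size $\sqrt n$ and storing per block both its seen-count and the smallest seen value in it --- each $\fO(\log n)$ bits, yielding $\fO(\sqrt n \log n)$ bits of block summaries, plus an $\fO(\sqrt n)$-bit bitmap for the active block. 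The extra $\log n$ factor over the 132 bound reflects the need to keep an explicit per-block representative value rather than merely one bit of aggregate information.

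The main technical obstacle is managing the active-block bitmap across its transitions. The active block can move whenever $m$ descends into a new block (for 132; dually, the prefix-maximum ascends for 213) or whenever the previously active block becomes fully saturated, and in the former case the new active block is one whose bitmap we have not been maintaining. Rebuilding it from only the block's seen-count is impossible in general, so I plan to design the algorithm to anticipate such transitions by exploiting that the active block's contents are strongly constrained by the prefix-minimum invariant: within each block, every value strictly below the current $m$ is unseen and at most one value --- namely $m$ itself, if $m$ lies in the block --- has a fixed known status. The bulk of the technical work consists in verifying that this bookkeeping preserves the claimed space bounds while correctly handling all transitions.
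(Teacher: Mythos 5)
Your high-level criteria are correct, and they differ from the paper's: for $132$ you propose accepting at the arrival of $v$ whenever the interval $(m,v)$ is not yet saturated (equivalently, tracking $u$, the smallest unseen value above $m$, and accepting when $v>u$); the paper instead maintains a $k$-wide bitmap anchored at the running extremum together with a set $D$ of $\fO(n/k)$ wide $21$-pairs with pairwise-disjoint value intervals, accepting only when the third element of a pattern arrives. Both are valid detection rules, so the question is whether your data structure implements yours within the space budget, and here there is a genuine gap.

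The gap is exactly where you suspect it, and your proposed fix does not close it. The information that is lost when the active-block bitmap is discarded cannot be reconstructed from a per-block count (nor from count plus min). The reason is that the set of seen values inside a non-active block need not be a contiguous interval. Concretely, take block size $4$ and the stream prefix $15,16,12,13,8,9,10,11$. After $13$ arrives one has $m=12$, $u=14$, and the bitmap of block $[13,16]$ is $1011$ (only $14$ is missing). When $8$ arrives, $m$ drops to $8$, $u$ drops to $9$, the active block moves to $[9,12]$, and the bitmap $1011$ of $[13,16]$ is discarded. After $9,10,11$ arrive the active block $[9,12]$ saturates and $u$ must advance into $[13,16]$. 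At that moment the only retained data about $[13,16]$ is count $3$ and min $13$, which is equally consistent with seen sets $\{13,14,15\}$ (so $u=16$) and $\{13,15,16\}$ (so $u=14$). The prefix-minimum invariant is of no help here: it constrains only values below $m=8$, while the ambiguity lies entirely above $m$. Misestimating $u$ leads to an incorrect accept/reject on suitable continuations. The paper sidesteps this because its $D$-set retains, for each wide $21$-pair (in your $132$ coordinates, each wide $12$-pair $(b',a')$ with $a'-b'\ge k$), the exact interval $[b',a']$, which is precisely the information your per-block counts destroy; its bitmap window is anchored at the running extremum and therefore never needs to be rebuilt from scratch for a block that has already accumulated unknown values.

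The $213$ side has the same flaw compounded. Your rule (track $a^\ast$, the smallest value that already has a later smaller value, and accept on $v>a^\ast$) is a correct detection criterion, and its maintenance indeed reduces to successor queries in the seen set. But a successor query inside the block containing $v$ again needs a per-block bitmap for whichever block $v$ falls in, and $v$ is adversarial, so one active-block bitmap does not suffice; per-block count and min do not determine the successor, by the same kind of counterexample as above. Moreover, note that your $213$ algorithm, if correct, would output an actual occurrence, which is stronger than what the paper achieves: the paper's $231/213$ algorithm partitions the stream by \emph{position} into $\sqrt{n}$ strips and uses counting arguments that certify existence without producing a witness, and the paper explicitly lists finding a $213$ occurrence in $\widetilde{\fO}(\sqrt n)$ space as an open problem. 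This is further evidence that the missing bookkeeping is not a routine matter. To repair the $132$ algorithm you would effectively need to remember, for each block that $u$ has exited downward, enough about its seen set to resume later -- which, once you compress to what is actually needed, is the paper's disjoint-interval set $D$.
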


Note that all bounds are tight up to a $\fO( \log n)$ factor, except \cref{p:ub-non-mon-3}, where a $\widetilde{\Theta}(\sqrt{n})$ gap remains. Our lower bounds in \cref{p:lb-non-mon-seq,p:lb-non-mon} extend to a \emph{multi-pass} variant: When the algorithm can take $p$ passes over the input, we obtain lower bounds of $\Omega(n/p)$ bits in both cases. Moreover, \cref{p:lb-non-mon-seq,p:lb-non-mon} hold even when randomized or nondeterministic streaming algorithms are allowed (cf.\ Jelínek, Opler, Valtr~\cite{JelinekOplerEtAl2024PPMMerge}).

We conclude the paper with some open questions (\cref{sec:conc}).

\section{Preliminaries}\label{sec:prelims}

We start with some basic notation around sequences and permutations.
Let $\sigma$ be a sequence. We write $|\sigma|$ for the length of the sequence, and $\Val(\sigma)$ for the set of values occurring in $\sigma$. For an integer $k$ with $1 \le k \le |\sigma|$, we let $\sigma(k)$ denote the $k$-the element of $\sigma$. If $\sigma, \tau$ are sequences, then $\sigma \tau$ denotes their concatenation. Sequences are usually written as concatenations of characters $x_1 x_2 \dots$, but also sometimes as tuples $(x_1, x_2, \dots)$ when needed for clarity.

A sequence $\pi$ is a \emph{permutation} of $[n]$ if each element of $[n]$ occurs exactly once.
A permutation \emph{pattern} $\pi$ is \emph{contained} in an integer sequence $\tau$ if $\tau$ has a subsequence $\sigma$ that is \emph{order-isomorphic} to~$\pi$. Formally, $\pi$ and $\sigma$ are order-isomorphic if $|\pi| = k = |\sigma|$ and $\pi(i) < \pi(j) \iff \sigma(i) < \sigma(j)$ for all $i, j \in [k]$. We sometimes call $\sigma$ an \emph{occurrence} of $\pi$ in $\tau$. An occurrence of 12 is called an \emph{increasing pair}, and an occurrence of 21 is called a \emph{decreasing pair}.

\subsection{Streaming problems}\label{sec:prelims:streaming}

A \emph{streaming problem} $(\Sigma, D, P)$ consists of an \emph{alphabet} $\Sigma$ (the set of possible elements appearing in the stream), a finite \emph{domain} $D \subseteq \Sigma^*$ (the set of possible input streams) and a set $P \subseteq D$ of \emph{YES-instances}.

Let $S_n$ denote the set of permutations of $[n]$ and let $S'_n \supset S_n$ denote the set of sequences with pairwise distinct elements of $[n]$. Let
\begin{align*}
	& \text{$\pi$-$\PPM_n$} = ([n], S_n, \{ \tau \in S_n \mid \text{$\tau$ contains $\pi$} \});\text{ and}\\
	& \text{$\pi$-$\SeqPPM_n$} = ([n], S'_n, \{ \tau \in S'_n \mid \text{$\tau$ contains $\pi$} \}).
\end{align*}

We formally model a \emph{streaming algorithm} $\mathbb{A}$ for the problem $(\Sigma, D, P)$ as a state machine $\mathbb{A} = (Q, q_0, F, t)$, where $Q$ is the set of \emph{states}, $q_0 \in Q$ is the \emph{initial state}, $F \subseteq Q$ is the set of \emph{accepting states}, and $t \colon Q \times \Sigma \rightarrow Q$ is the \emph{transition function}. Running the streaming algorithm on an input $\tau = x_1 x_2 \dots x_n \in D$ means computing $q_1 = t(q_0,x_1)$, then $q_2 = t(q_1,x_2)$, and so on.

The algorithm is \emph{correct} if for each input $\tau \in D$, the final state $q^*$ satisfies $q^* \in F$ if and only if $\tau \in P$. Note that the algorithm may behave arbitrarily if $\tau \notin D$.
The \emph{space usage} of $\mathbb{A}$ is $\ceil{\log |Q|}$, i.e., the number of bits required to encode its states.

Observe that this definition allows for arbitrary computation, and is only restricted by space usage \emph{between two input elements}; the transition function may perform arbitrary computation. Our lower bounds in this model thus hold for a wide range of computational models.
On the other hand, the algorithms in \cref{sec:ppm:mon,sec:ppm:size-3-ub} can be implemented in the standard Word-RAM model\footnote{Space usage is still measured in bits, not in words.}, and run in polynomial time.

Most of our lower bounds also hold for algorithms that make multiple passes. Formally, a \emph{$k$-pass streaming algorithm} receives $k$ copies of $\tau$ in sequence (instead of only one), and has to decide $\tau \in P$ only after receiving all of them.

\subsection{Communication complexity}

\newcommand{\CCIndex}{\textsc{Index}}
\newcommand{\CCDisj}{\textsc{Disj}}
\newcommand{\CCUDisj}{\textsc{UDisj}}

To prove our lower bounds, we will use tools from Communication Complexity \cite{Yao1979CC,Lovasz1990,KushilevitzNisan1997, rao2020communication}. We now give some necessary background.
Let $A$ and $B$ be finite sets, and let $Q \subseteq A \times B$. Suppose we have two players, Alice, who knows some $a \in A$, and Bob, who knows some $b \in B$. Alice and Bob each want to decide whether $(a,b) \in Q$.
We are interested in how much information they have to send to each other to determine whether $(a,b) \in Q$. On the other hand, we do not care how much time they need, and assume both may compute arbitrary functions.
The minimal number of bits Alice and Bob have to exchange in the worst case is called the \emph{deterministic communication complexity} of~$Q$, and denoted by $\DCC(Q)$. For more details, we refer to the textbook by Kushilevitz and Nisan \cite{KushilevitzNisan1997}.

The concrete problem we consider is called \emph{disjointness}. Here, Alice and Bob each get a set $S, T \subseteq [n]$, and need to determine whether $S$ and $T$ are disjoint. Formally, let $A = B = 2^{[n]}$, and let $\CCDisj_n = \{ (S, T) \in A \times B \mid S \cap T = \emptyset \}$.

\begin{theorem}[Kushilevitz and Nisan~{\cite[Example 1.21]{KushilevitzNisan1997}}]
	$\DCC(\CCDisj_n) = n+1$.
\end{theorem}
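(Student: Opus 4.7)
The plan is to establish matching upper and lower bounds of $n+1$ bits. The upper bound is a direct protocol, while the lower bound follows from the standard partition/fooling-set method, with a small extra trick to squeeze out the $+1$.

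For the upper bound, I would describe the following trivial protocol: Alice sends the characteristic vector of $S$, using exactly $n$ bits; Bob then checks whether $S \cap T = \emptyset$ and sends back a single bit with the answer. Both players now know whether $(S,T) \in \CCDisj_n$, for a total of $n+1$ bits of communication.

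For the lower bound, I would use the fact that for any deterministic protocol $\Pi$ with communication cost $c$, the set of input pairs reaching any particular leaf of the protocol tree forms a monochromatic combinatorial rectangle $R_A \times R_B$ (constant on $\CCDisj_n$), and there are at most $2^c$ leaves, so $\CCDisj_n$ together with its complement admits a partition into at most $2^c$ monochromatic rectangles. It thus suffices to show any such partition needs at least $2^n + 1$ rectangles. To lower bound the number of $1$-rectangles, consider the fooling set $F = \{(S, [n]\setminus S) : S \subseteq [n]\}$. Each of these $2^n$ pairs is a YES-instance. If two distinct pairs $(S, S^\complement)$ and $(T, T^\complement)$ belonged to the same monochromatic $1$-rectangle, then $(S, T^\complement)$ would also lie in that rectangle and be a YES-instance, giving $S \subseteq T$; symmetrically $T \subseteq S$, contradicting $S \neq T$. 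Hence the partition contains at least $2^n$ distinct $1$-rectangles. Since $([n], [n])$ is a NO-instance, at least one further $0$-rectangle must be used, for a total of at least $2^n + 1$, yielding $c \geq \lceil \log_2(2^n + 1)\rceil = n+1$.

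The main conceptual step is the fooling-set argument. The only subtlety, which I expect to be the mild obstacle, is obtaining the exact bound $n+1$ rather than merely $n$: the pure fooling set gives only $\log_2 2^n = n$, and the extra $+1$ requires noticing that the rectangle partition of the full input matrix must also cover at least one $0$-entry, forcing an additional rectangle. Apart from this, the argument is entirely standard and follows the presentation in Kushilevitz and Nisan.
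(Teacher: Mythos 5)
Your proof is correct. The paper does not actually prove this statement---it cites it as Example 1.21 of Kushilevitz and Nisan---and your argument is precisely the standard one found there: the upper bound by having Alice send the characteristic vector of $S$ and Bob reply with the one-bit answer, and the lower bound by counting monochromatic rectangles via the fooling set $\{(S, [n]\setminus S)\}$ of size $2^n$, plus the observation that at least one additional $0$-rectangle is needed to cover a NO-instance such as $([n],[n])$, forcing $2^c \geq 2^n+1$ and hence $c \geq n+1$. The fooling-set step (if $(S,S^\complement)$ and $(T,T^\complement)$ share a $1$-rectangle then $(S,T^\complement)$ and $(T,S^\complement)$ are YES-instances, so $S\subseteq T\subseteq S$) is exactly right, and the ceiling calculation $\lceil\log_2(2^n+1)\rceil = n+1$ is what delivers the $+1$.
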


The randomized (and even nondeterministic) communication complexity of $\CCDisj_n$ is also known to be at least $n$~\cite{KushilevitzNisan1997}. This is the reason that most of our lower bounds extend to these models; we omit further details.

Let $(\Sigma, D, P)$ be a streaming problem and $Q\subseteq A \times B$ be a two-party communication complexity problem. We say $P$ is \emph{$k$-$Q$-hard} if there exists functions $f_1, f_2, \dots, f_k \colon A \rightarrow \Sigma^*$ and $g_1, g_2, \dots, g_k \colon B \rightarrow \Sigma^*$ such that, for each input $(a,b) \in A \times B$, the sequence \[\sigma = f_1(a) g_1(b) f_2(a) g(b) \dots f_k(a) g_k(b)\] is contained in $D$ and we have $\sigma \in P$ if and only if $(a,b) \in Q$.

\begin{lemma}\label{p:cc-lb}
	If a streaming problem is $k$-$Q$-hard, then every algorithm solving it with $p \in \N_+$ passes requires at least $\DCC(Q) / (2kp)$ bits of space.
\end{lemma}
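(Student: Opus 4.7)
The plan is to simulate the streaming algorithm in a two-party protocol. Suppose $\mathbb{A} = (Q, q_0, F, t)$ is a $p$-pass streaming algorithm solving the problem $(\Sigma, D, P)$ using $s = \lceil \log |Q| \rceil$ bits. Given an input $(a, b) \in A \times B$, Alice (who knows $a$) can compute the blocks $f_1(a), \dots, f_k(a)$ and Bob (who knows $b$) can compute $g_1(b), \dots, g_k(b)$. By $k$-$Q$-hardness, the concatenation $\sigma = f_1(a) g_1(b) \cdots f_k(a) g_k(b)$ lies in $D$ and satisfies $\sigma \in P \iff (a,b) \in Q$.

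The communication protocol proceeds as follows. Alice starts with the initial state $q_0$, feeds $f_1(a)$ to the transition function of $\mathbb{A}$, obtaining some state $q$, and sends $q$ to Bob using $s$ bits. Bob then continues the simulation by feeding $g_1(b)$, obtains the next state, and sends it back to Alice. They alternate through all $2k$ blocks of one pass. For $p$ passes, after finishing one pass they simply continue with the first block of the next pass from the current state, so the total number of messages over all passes is at most $2kp$, each of $s$ bits. After the final block, whoever holds the state checks membership in $F$ and announces the answer (at most one more bit, which is absorbed by the slack in the bound). By correctness of $\mathbb{A}$, this protocol correctly decides whether $(a,b) \in Q$.

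Hence the total communication is at most $2kps$ bits, which must be at least $\DCC(Q)$ by definition of deterministic communication complexity. Rearranging gives $s \geq \DCC(Q) / (2kp)$, as required.

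The argument is a standard simulation, so there is no real obstacle; the only point requiring a little care is the bookkeeping of who holds the current state at the boundary between two consecutive passes, to confirm that no extra message is needed there and the total message count remains $2kp$ rather than something like $2kp + p$.
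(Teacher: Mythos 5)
Your proof is correct and follows essentially the same simulation argument as the paper: Alice and Bob alternately run the streaming algorithm on their respective blocks, passing the internal state back and forth, and the $2kps$ bound on total communication yields the lower bound. You are slightly more explicit about the pass-boundary bookkeeping than the paper (which simply asserts $2kp$ steps), but the structure and conclusion are the same.
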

\begin{proof}
	Let $\mathbb{A}$ be a correct algorithm with $k$ passes. We construct a communication complexity protocol for $Q$ as follows. Recall that Alice receives $a \in A$ and Bob receives $b \in B$. Alice initializes~$\mathbb{A}$, runs it on $f_1(a)$, and passes the internal state of the algorithm to Bob. Bob continues running the algorithm on $g_1(b)$, and passes the resulting state to Alice. Alice and Bob continue this until they have cycled through the functions $f_1, f_2, \dots, f_k$ and $g_1, g_2, \dots, g_k$ $p$ times. After that, $\mathbb{A}$ outputs $(a,b) \in Q$ by definition, so both Alice and Bob know the result.
	
	Suppose $\mathbb{A}$ requires $s$ bits of space. Then at most $s$ bits are sent in every step. Since there are $2 k \cdot p$ steps, the total number of exchanged bits is $2kp s$. By definition of the deterministic communication complexity, we have $2kps \ge \DCC(Q)$, which implies the statement.
\end{proof}

\subsection{Complement and reverse}

We now show some basic facts about how our two problems behave under taking the complement or reverse of the pattern.
If $\pi$ is a sequence of distinct values in $[n]$, the \emph{complement} $\pi^\complement$ of $\pi$ is obtained by replacing each value $i$ by $n+1-i$.

\begin{lemma}\label{p:complement}
	Let $\pi$ be a permutation. Then, for each $n \in \N_+$, the space complexities of $\pi$-$\PPM_n$ and $\pi^\complement$-$\PPM_n$ are equal, and the space complexities of $\pi$-$\SeqPPM_n$ and $\pi^\complement$-$\SeqPPM_n$ are equal.%
\end{lemma}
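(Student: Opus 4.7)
The plan is to establish the lemma via a direct simulation argument that exploits the involutive nature of complementation. First I would observe the key combinatorial fact: for any sequence $\tau$ of distinct values in $[n]$ and any permutation $\pi$, the sequence $\tau$ contains $\pi$ if and only if $\tau^\complement$ contains $\pi^\complement$. This follows because the map $i \mapsto n+1-i$ is an order-reversing bijection on $[n]$, so it flips every comparison consistently, hence preserves order-isomorphism type up to complementation. Moreover, $\tau^\complement$ is a permutation of $[n]$ whenever $\tau$ is, and lies in $S'_n$ whenever $\tau$ does, so the domains $D$ of the two streaming problems correspond under complementation.

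Next, given a streaming algorithm $\mathbb{A} = (Q, q_0, F, t)$ for $\pi$-$\PPM_n$ using $s$ bits of space, I would construct an algorithm $\mathbb{A}'$ for $\pi^\complement$-$\PPM_n$ with the same state set, initial state, and accepting set, but with the transition function modified to $t'(q, x) = t(q, n+1-x)$. Since $n$ is hard-coded (the problems are non-uniform), no extra information needs to be stored across input elements; all the complementation happens inside the transition function, which may perform arbitrary computation by the model definition. Running $\mathbb{A}'$ on input $\tau$ is identical to running $\mathbb{A}$ on $\tau^\complement$, so by the combinatorial fact above, $\mathbb{A}'$ accepts $\tau$ iff $\mathbb{A}$ accepts $\tau^\complement$ iff $\tau^\complement$ contains $\pi$ iff $\tau$ contains $\pi^\complement$. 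Hence $\mathbb{A}'$ is correct and uses $s$ bits.

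By applying the same construction in the reverse direction (using that $(\pi^\complement)^\complement = \pi$), we obtain the matching inequality, establishing equality of the space complexities. The identical argument, word for word, handles $\SeqPPM_n$: the only change is that the relevant domain is $S'_n$ rather than $S_n$, and complementation clearly maps $S'_n$ to itself.

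There is no real obstacle here; the statement is a routine symmetry observation. The only point worth being careful about is that the construction truly uses zero additional space, which is immediate from our streaming model since the transition function is unrestricted in computational power and $n$ is a fixed constant of the algorithm rather than part of the input.
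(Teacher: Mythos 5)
Your proof is correct and takes essentially the same approach as the paper: simulate the algorithm for one pattern on the complemented stream, absorbing the complementation into the transition function (which, as you note, costs no extra space in this model since $n$ is hard-coded and transitions may compute arbitrarily). The paper's proof is terser but relies on exactly this observation, and it separately remarks on the potential $\fO(\log n)$ overhead in models that charge for intra-step computation, which you also correctly dismiss.
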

\begin{proof}
	We can easily compute the complement of the input sequence by complementing each individual value. Thus, an algorithm that finds $\pi$ in an input $\tau$ can be made to find $\pi$ in $\tau^\complement$, which is equivalent to finding $\pi^\complement$ in $\tau$.
\end{proof}

Note that equality in \cref{p:complement} holds in our model (\cref{sec:prelims:streaming}). If we measure space usage while processing a single input element, we may incur an $\fO(\log n)$ overhead needed to compute the complement, depending on the exact model. However, an additive $\fO(\log n)$ term does not matter for any of our results.

We do not know whether an analog of \cref{p:complement} holds for taking the \emph{reverse} of a permutation. However, our main lower bound technique, using communication complexity, is easily seen to be invariant under taking the reverse:

\begin{observation}\label{p:cc-hard-reverse}
	Let $\pi$ be a permutation and let $\rho$ be the reverse of $\pi$. If $\pi$-$\PPM_n$ is $k$-$Q$-hard for some communication complexity problem $Q$, then $\rho$-$\PPM_n$ is also $k$-$Q$-hard.
\end{observation}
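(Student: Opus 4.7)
The plan rests on the classical identity $\pi \subseteq \tau \iff \rho \subseteq \tau^R$, valid for any finite integer sequence $\tau$ because $\rho$ is by definition the reverse of $\pi$. I would use this to transform a $k$-$Q$-hard reduction for $\pi$-$\PPM_n$ into one for $\rho$-$\PPM_n$ by having Alice and Bob together output the reversal of the stream emitted by the original reduction, keeping each player's input fixed (so that $Q$ is unchanged, not transposed).

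Concretely, let $f_1, \dots, f_k, g_1, \dots, g_k$ witness the $k$-$Q$-hardness of $\pi$-$\PPM_n$, so that $\sigma = f_1(a) g_1(b) \cdots f_k(a) g_k(b)$ is a permutation of $[n]$ which contains $\pi$ iff $(a,b) \in Q$. Then $\sigma^R = g_k(b)^R f_k(a)^R g_{k-1}(b)^R f_{k-1}(a)^R \cdots g_1(b)^R f_1(a)^R$ is a permutation of $[n]$ containing $\rho$ iff $(a,b) \in Q$. The task reduces to exhibiting $\sigma^R$ as a sequence $f_1'(a) g_1'(b) \cdots f_k'(a) g_k'(b)$ with Alice depending only on $a$ and Bob only on $b$. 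My intended choice is $f_1'(a) := \varepsilon$, then $g_i'(b) := g_{k-i+1}(b)^R$ for $i \in [k]$ and $f_i'(a) := f_{k-i+2}(a)^R$ for $i \in \{2,\dots,k\}$; this assignment yields precisely $\sigma^R$ up to a terminal Alice chunk $f_1(a)^R$.

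The main obstacle will be this leftover chunk: $\sigma^R$ naturally has $2k$ nontrivial pieces in Bob--Alice alternation, whereas the $k$-pair Alice--Bob template accommodates only $2k-1$ nontrivial pieces once $f_1'$ is set to $\varepsilon$. I plan to resolve this by arguing that we may assume, without loss of generality, that $f_1(a) = \varepsilon$ in the original reduction---either directly (most reductions built in the paper have this shape), or by prepending an empty leading round to an arbitrary $k$-$Q$-hard reduction, which adds at most one round and is absorbed into the constant of \cref{p:cc-lb}. Under this assumption the construction above produces exactly $\sigma^R$, the partition respects the Alice/Bob dependency structure by construction, and hence $\rho$-$\PPM_n$ is $k$-$Q$-hard, as claimed.
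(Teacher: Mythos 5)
The paper records this as an observation without giving a proof, so there is no official argument to compare against; what you propose is certainly the intended route. The core is right: reversing the text reverses the contained patterns, $\sigma^R$ is still a permutation of $[n]$, and the only real issue is re-partitioning $\sigma^R$ into the Alice--Bob alternation demanded by the definition of $k$-$Q$-hardness. You have correctly isolated that issue: $\sigma^R$ begins with a Bob-dependent chunk and ends with an Alice-dependent one, so the $2k$ reversed chunks do not align with the $2k$ slots of the template.

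Two caveats on how you close that gap. First, the claim that one may ``directly'' assume $f_1(a)=\varepsilon$ because most reductions in the paper have this shape is wrong: every reduction in \cref{sec:ppm:lb} (e.g.\ \cref{p:lb-non-mon-seq-3,prop:OLPPM_lower_bound_4abc}) starts with a nonempty Alice block, and for $k=1$ one cannot have $f_1=\varepsilon$ at all without making the stream independent of $a$. Second, your fallback of inserting an empty round is the correct and fully general fix, but it inherently costs one round: setting $f_1'=\varepsilon$, $g_i'=g_{k+1-i}(b)^R$ and $f_{i+1}'=f_{k+1-i}(a)^R$ for $i\in[k]$, and $g_{k+1}'=\varepsilon$ realizes $\sigma^R$ as a $(k+1)$-round decomposition, so what your argument actually establishes is that $\rho$-$\PPM_n$ is $(k+1)$-$Q$-hard, not $k$-$Q$-hard as literally stated. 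This discrepancy is harmless in every application, since \cref{p:cc-lb} then loses only a factor $(k+1)/k\le 2$, which is absorbed into the $\Omega_\pi(\cdot)$ bounds; but you should either state your conclusion with $k+1$ or explicitly justify why the constant shift does not matter, rather than asserting the same $k$.
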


\section{Monotone patterns}\label{sec:ppm:mon}

In this section, we discuss the complexity of $\pi$-$\PPM_n$ and $\pi$-$\SeqPPM_n$ when $\pi$ is an \emph{increasing} pattern $1 2 \dots k$. The same results hold for \emph{decreasing} patterns $k (k-1) \dots 1$, by \cref{p:complement}.

To prove the upper bound, we use the well-known \emph{longest increasing subsequence} algorithm~\cite{Schensted1961,Knuth1973,Fredman1975} with an early stopping condition. Suppose we want to detect the increasing pattern of length $k$. The algorithm works as follows: For each $i \in [k]$, maintain the smallest value $x_i$ seen so far that is the last value in an increasing subsequence of length exactly $i$. Let $x_i = \bot$ as long as no such element exists. Updating each $x_i$ when a new value $y$ arrives is simple: For each $i \in [k]$, if $x_i < y$ and $y < x_{i+1}$ or $x_{i+1} = \bot$, set $x_{i+1} \gets y$. Accept as soon as $x_k \neq \bot$.
Since the algorithm stores at most $k$ input values at any time, we have:

\begin{lemma}
	If $\pi$ is a monotone permutation of length $k$, then $\pi$-$\SeqPPM_n$ (and thus $\pi$-$\PPM_n$) can be solved with $k \cdot \ceil{\log n}$ bits of space.
\end{lemma}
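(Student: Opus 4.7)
By \cref{p:complement}, it suffices to handle the increasing pattern $\pi = 12\cdots k$, and the goal is to verify the algorithm already sketched before the lemma. The plan is to pin down the algorithm's semantics via a single invariant, prove it by induction on the stream length, and read off the space bound.

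The invariant I would maintain is: after processing any prefix $\tau(1), \dots, \tau(t)$, for each $i \in [k]$, $x_i$ equals the smallest value $v$ such that the prefix contains an increasing subsequence of length $i$ whose last element is $v$, and $x_i = \bot$ if no such subsequence exists. A direct consequence is that the defined entries form a strictly increasing prefix $x_1 < x_2 < \dots < x_j$ followed by $\bot$'s: any witness of $x_i \neq \bot$ contains a length-$(i-1)$ subsequence ending at some value strictly less than $x_i$, forcing both $x_{i-1} \neq \bot$ and $x_{i-1} < x_i$.

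The inductive step is the heart of the argument. Let $y = \tau(t+1)$ and let $i^*$ be the largest index with $x_{i^*} < y$ (taking $i^* = 0$ if no such index exists). I would then verify three things: (i) for $j > i^* + 1$, no new length-$j$ subsequence ending at $y$ exists, since such an extension would require the minimum tail of a length-$(j-1)$ subsequence to be below $y$, contradicting maximality of $i^*$; (ii) for $j \le i^*$, the current $x_j$ is at most $x_{i^*} < y$, so subsequences ending at $y$ cannot lower the length-$j$ minimum tail; (iii) for $j = i^* + 1$, a new length-$j$ subsequence ending at $y$ does exist (by extending a length-$i^*$ witness), and its tail $y$ competes with the current $x_{i^*+1}$, so the update must set $x_{i^*+1} \gets \min(x_{i^*+1}, y)$ with the convention $\min(\bot, y) = y$. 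The conditional loop given in the text performs exactly this single update: once the genuine update fires at index $i^*$, the condition $x_{i^*+1} < y$ subsequently fails, so no cascade occurs. Acceptance correctness then follows immediately, since $x_k \neq \bot$ at termination if and only if $\tau$ contains an increasing subsequence of length $k$.

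For the space bound, the state is fully described by the tuple $(x_1, \dots, x_k) \in ([n] \cup \{\bot\})^k$. The one technical point, which I expect to be the only non-mechanical obstacle, is fitting the sentinel $\bot$ into the stated $k \cdot \ceil{\log n}$ bit budget rather than the naive $k \cdot \ceil{\log(n+1)}$. I would handle this by exploiting the prefix-of-increasing structure from the invariant: store only the defined prefix $x_1 < \dots < x_j$, from which $j$ itself is recoverable, bounding the number of reachable states by $\sum_{j=0}^k \binom{n}{j}$, which is at most $n^k$ in the relevant regime $n > k^2$ and hence fits in $k \ceil{\log n}$ bits.
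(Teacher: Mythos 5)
Your proposal is correct and takes the same approach as the paper: maintain the minimum tail $x_i$ of an increasing subsequence of each length $i \le k$ and accept when $x_k$ becomes defined. You are actually more careful than the paper's one-line justification (``the algorithm stores at most $k$ input values at any time''), both in spelling out the invariant and single-update argument, and in handling the encoding of $\bot$ to fit within $k\ceil{\log n}$ rather than $k\ceil{\log(n+1)}$ bits; note that the latter refinement implicitly relies on the $n>k^2$ hypothesis from \cref{p:compl-mon}, which the lemma as stated omits.
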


We now show a tight lower bound for all $k \ge 4$ that holds even if the input is a permutation. %

\begin{lemma}\label{p:lb-mon-4}
	If $\pi$ is the increasing or decreasing permutation of length $k \ge 4$, then solving the $\pi$-$\PPM_n$ problem requires $\Omega( k \log n)$ bits of space if $n > k^2$.
\end{lemma}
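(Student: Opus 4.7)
The plan is to apply \cref{p:cc-lb} with a suitable communication problem $Q$ whose two-way deterministic complexity is $\Omega(k \log n)$. A concrete choice is the ``product comparison'' problem: Alice holds $(a_1, \dots, a_{k-2}) \in [m]^{k-2}$ and Bob holds $(b_1, \dots, b_{k-2}) \in [m]^{k-2}$, where $m = \Theta(n/k)$, and they must decide whether $a_i < b_i$ for every $i \in [k-2]$. By a direct-sum style argument---embedding $k-2$ independent comparisons, each of which individually requires $\Omega(\log m)$ bits---one gets $\DCC(Q) = \Omega((k-2) \log m) = \Omega(k \log n)$, using the assumption $n > k^2$ to ensure $\log m = \Omega(\log n)$.

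I would then construct a one-alternation reduction. Partition $[n]$ into $k-1$ consecutive blocks $B_1 < B_2 < \dots < B_{k-1}$ of size $\Theta(n/k)$, and within each block split between odd (``Alice's'') and even (``Bob's'') values so that the value sets used by the two players are always disjoint. This guarantees that the concatenated stream $f_1(a) g_1(b)$ is always a permutation of $[n]$, regardless of inputs. Alice's stream consists of: (i) a decreasing preamble containing all odd values of $B_{k-1}$ (which sit above everything else and are neutralized by the decreasing order); (ii) her encoding $a_1, a_2, \dots, a_{k-2}$, establishing LIS thresholds $x_i = a_i$; and (iii) the remaining unused odd values of $B_1, \dots, B_{k-2}$ in decreasing order as postamble. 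Bob's stream consists symmetrically of probes $b_1, b_2, \dots, b_{k-2}$ (one per block, in block order), followed by fixed ``extension'' values that can extend the LIS further, and finally all remaining even values in decreasing order. A threshold analysis of the LIS algorithm shows that the resulting stream contains $1\,2\,\cdots\,k$ if and only if $a_i < b_i$ for every $i$: in the YES case, a length-$k$ LIS is assembled by threading Alice's anchors and Bob's probes through the blocks in order, while any violated comparison $a_i \ge b_i$ breaks the chain at level $i$.

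The main obstacle is verifying that the padding introduces no spurious length-$k$ LIS in the NO case and does not destroy the intended one in the YES case. The preamble (high values, decreasing) contributes at most one element to any increasing subsequence and sits above every threshold, so it is safely out of the way; the postambles are bounded above by the top of their respective blocks and, streamed in decreasing order, can only update existing thresholds downward without extending their length. Tracing the LIS thresholds through each phase of the stream requires care but is routine. Once these invariants are verified, \cref{p:cc-lb} applied with a single alternation and one pass yields a space lower bound of $s \ge \DCC(Q)/2 = \Omega(k \log n)$ bits, as required.
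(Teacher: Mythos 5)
Your reduction analysis has a genuine flaw: the LIS algorithm does not test the \emph{AND} of all $k-2$ greater-than comparisons. Once Bob's probe $b_j$ exceeds the current threshold at level $j$ (which happens as soon as \emph{some} earlier $a_i < b_i$, since each subsequent probe lies in a strictly higher block and thus automatically exceeds the previous probe), the threshold chain advances by one level and stays advanced — later probes $b_{j'}$ are compared against $b_{j'-1}$, not against $a_{j'}$. Tracing the thresholds shows that after Bob's probes the LIS has length $k-1$ if and only if $\exists\, i : a_i < b_i$; so your construction decides the \emph{OR} of the comparisons, not the AND, and the claim that ``any violated comparison $a_i \ge b_i$ breaks the chain at level $i$'' is false. (The OR version also has deterministic communication complexity $\Omega(k\log m)$ via an analogous fooling set, so the idea is rescuable, but the stated analysis is incorrect. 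Also, appealing to ``a direct-sum style argument'' for deterministic communication complexity is not safe in general; you need an explicit fooling set for your $Q$, which does exist but must be stated.) There is a second bug: your postamble (iii) comes \emph{after} the encoding. Any remaining odd value in $B_{k-2}$ that lies above $a_{k-2}$ would immediately set $x_{k-1}$, prematurely reaching length $k-1$ regardless of Bob's input; values between $a_{i-1}$ and $a_i$ would overwrite $x_i$. The padding must precede the encoding as a single decreasing preamble (so it only ever touches $x_1$), which is exactly what the paper does.

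By contrast, the paper sidesteps both issues by not reducing from a named communication problem at all. It constructs Alice's permutation $\alpha$ so that the LIS thresholds after $\alpha$ equal a chosen increasing sequence $\rho$, counts $\binom{n/2-1}{k-3} = 2^{\Omega(k\log n)}$ possible $\rho$'s, and then, for any two distinct $\rho, \sigma$, builds a \emph{tailor-made} Bob suffix $\beta$ (depending on the first index at which they differ) such that exactly one of $\alpha_\rho\beta$, $\alpha_\sigma\beta$ contains the monotone pattern. This is a one-way fooling-set argument applied directly to the streaming state; it is cleaner than a two-way reduction and avoids the AND/OR pitfall entirely.
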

\begin{proof}	
	We give a direct communication complexity proof. Assume $n$ is even and $n \ge k^2$. Let Alice have the first half of the input permutation, which will consist of the odd integers in $[n]$, and let Bob have the latter half, containing the even integers of $[n]$. We will construct a set of $2^{\Omega(k \log n)}$ possible inputs for Alice such that Bob must be able to distinguish between any two of them. This means that Alice has to send $\Omega( k \log n)$ bits of information, implying the stated lower bound.
	
	Let $\rho = r_1 r_2 \dots r_{k-2} \in [n]^{k-2}$ be an increasing sequence of odd integers with $r_1 = 1$. Observe that there are
	\[ \binom{n/2 - 1}{k-3} \ge \left(\frac{n}{k}\right)^{\Omega(k)} \ge (\sqrt{n})^{\Omega(k)} = 2^{\Omega(k \log n)} \]
	 such sequences.
	We claim that we can construct a permutation $\alpha$ for Alice so that $\rho$ describes the state of the algorithm given at the start of the section; that is, we have that $r_i$ is the lowest value that is the last element of an increasing sequence of length $i$. Indeed, let $\alpha$ start with a decreasing sequence of all odd integers \emph{not} contained in $\rho$, and end with $\rho$.
	
	\begin{figure}
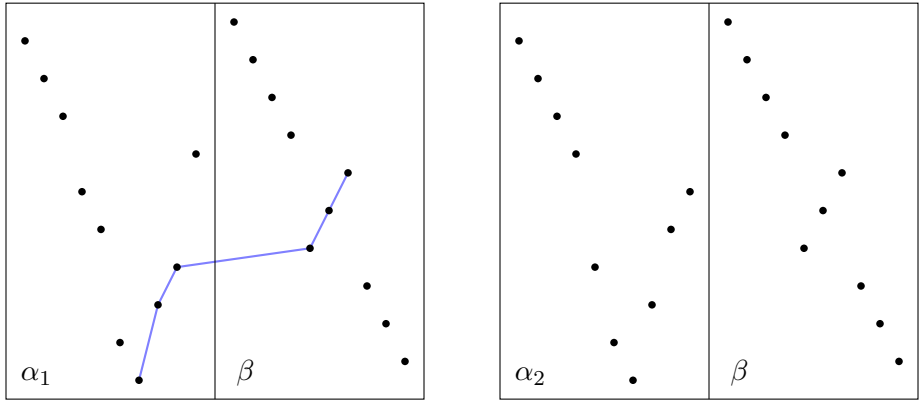

		\centering
		\begin{tikzpicture}[
			scale=0.25,
			point/.style={fill, circle, inner sep=1pt}
			]
			\draw[blue!50!white, thick] (7,1) -- (8,5) -- (9,7) -- (16,8) -- (17,10) -- (18,12);
			\input{figs/red_PPM_mon.tex}
			\node[anchor=base,xshift=1.5mm] at (1,1) {$\alpha_1$};
			\node[anchor=base,xshift=1.5mm] at (12,1) {$\beta$};
			\begin{scope}[shift={(26,0)}]
				\input{figs/red_PPM_mon2.tex}
				\node[anchor=base,xshift=1.5mm] at (1,1) {$\alpha_2$};
				\node[anchor=base,xshift=1.5mm] at (12,1) {$\beta$};
			\end{scope}
		\end{tikzpicture}
		\caption{Constructions $\alpha_1 \beta$ (left) and $\alpha_2 \beta$ (right) for \cref{p:lb-mon-4} with $k = 6$, $n = 20$, $\rho = (1,5,7,13)$, and $\sigma = (1,5,9,11)$. Observe that $\alpha_1 \beta$ contains an increasing subsequence of length 6 (marked in blue), but $\alpha_2 \beta$ does not.}\label{fig:red:mon-4}
	\end{figure}
	
	Now let $\alpha_1, \alpha_2$ be constructed in this way from distinct integer sequences $\rho = r_1 r_2 \dots r_{k-1}$ and $\sigma = s_1 s_2 \dots s_{k-1}$. We show that there exists a permutation $\beta$ of the even values in $[n]$ so that out of $\alpha_1 \beta$ and $\alpha_2 \beta$, exactly one contains $\pi$. That means that Bob, when given $\beta$, must be able to distinguish between $\alpha_1$ and $\alpha_2$, as desired. (See \cref{fig:red:mon-4}.)
	
	Let $i \in [k-1]$ be minimal such that $r_i \neq s_i$, and assume $r_i < s_i$ without loss of generality. Let now $\beta = \beta_1 \beta_2 \beta_3$ be the following sequence. Start with the decreasing subsequence $\beta_1 = (n, n-2, \dots, r_i+2(k-i)+1)$, continue with increasing $\beta_2 = (r_i+1, r_i+3, \dots, r_i+2(k-i)-1)$ and finish with $\beta_3 = (r_i-1, r_i-3, \dots, 2)$. Observe that $\alpha_1 \beta$ contains an increasing subsequence of length $k$. Indeed, recall that $r_i$ is the last element of an increasing subsequence of length $i$, which can be concatenated with $\beta_2$ to obtain one of length $k$.
	
	We finish the proof by arguing that $\alpha_2 \beta$ does not contain a increasing subsequence of length $k$. Suppose for the sake of contradiction that such a subsequence $\gamma = \gamma_1 \gamma_2$ exists, with $\gamma_1$ being part of $\alpha_2$ and $\gamma_2$ being part of $\beta$. Write $j = |\gamma_1|$, so $k-j = |\gamma_2|$. The longest increasing subsequence in $\beta$ has length $k-i$, implying $j \ge i$. Further, since the longest increasing subsequence in $\alpha_2$ has length $k-2$, we have $k-j \ge 2$.
	
	Note that $\gamma_1$ ends with a value at least $s_j \ge s_i + 2(j-i) > r_i + 1 + 2(j-i)$, and $\gamma_2$ starts with a value at most $r_i + 2(j-i+1)-1 < s_j$, which means $\gamma_1 \gamma_2$ is not increasing, a contradiction.
\end{proof}

For smaller patterns, we can still show a $\Omega(\log n) = \Omega(k \log n)$ bound. Note that this is easy for $\pi$-$\SeqPPM_n$, even if $\pi= 12$ and the input sequence consists of only two values $x, y$: By a simple adversarial argument, a correct algorithm must be able to distinguish between any two odd values for $x$. For $\pi$-$\PPM$, the proof is a little more involved.

\begin{lemma}\label{p:lb-mon2}
	Solving the $\pi$-$\PPM_n$ problem for monotone $\pi$ of length at least 2 requires $\Omega(\log n)$ space.
\end{lemma}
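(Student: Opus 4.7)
By \cref{p:complement}, I may assume $\pi = 1 2 \ldots k$ is increasing. For $k \ge 4$, the bound is already supplied by \cref{p:lb-mon-4}, so the main case is $k=2$; the case $k=3$ is handled by an analogous (in fact easier) argument, since $123$ has many avoiders, giving more flexibility. For $\pi = 12$, the only $\pi$-avoiding permutation of $[n]$ is the fully-decreasing $n, n-1, \ldots, 1$, and the algorithm's job is to single this permutation out from all of $S_n$.

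Write $P_m = n, n-1, \ldots, n-m+1$ for the length-$m$ decreasing prefix and $q_m$ for the state reached after reading $P_m$. The plan is to show $q_0, q_1, \ldots, q_{n-1}$ are pairwise distinct, which gives $|Q| \ge n$ and hence $\Omega(\log n)$ bits of space. Suppose for contradiction that $q_m = q_{m'} =: s$ with $m < m' \le n-1$, and consider the input $v = n - m'$, which is valid after both $P_m$ and $P_{m'}$. From $q_{m'}$, $v$ is the correct next spine element, so $t(s, v) = q_{m'+1}$. From $q_m$, however, $v < n-m$ is \emph{incorrect}, and the resulting prefix $P_m(n-m')$ is YES-locked: the still-unplaced value $n-m > n-m'$ must form a 12-pair with $n-m'$ in every completion. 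By determinism, $q_{m'+1}$ therefore simultaneously plays the role of a spine state (with a NO completion, namely the fully-decreasing one) and of a YES-locked off-spine state.

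Iterating this reasoning along the decreasing suffix $n-m'-1, n-m'-2, \ldots, 1$ and invoking determinism at every step, the rejecting final state $q_n$ is also reached mid-stream via the off-spine prefix $P_m \cdot (n-m', n-m'-1, \ldots, 1)$; from here the algorithm must still process any permutation of the remaining $m' - m$ values $\{n-m'+1, \ldots, n-m\}$ and end in an accepting state. The principal obstacle is converting this into an outright contradiction, because the state-machine model consults the accept/reject label only at the true end of the input, so $q_n$ may legitimately be ``rejecting at the natural end'' while transitioning out of that labelling mid-stream. I plan to close this gap by specializing to $m' = m+1$, so that the post-$q_n$ continuation is a single input $n-m$ that must drive $q_n$ to an accepting state; combining this over the many $m$ with $q_m = q_{m+1}$ guaranteed by the pigeonhole assumption $|Q| < n$ forces too many outgoing accepting transitions from $q_n$ to be compatible with the fully-decreasing trajectory returning to $q_n$ in the first place.

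If that counting step proves too delicate, my backup is a direct reduction from \textsc{Greater-Than} (deterministic communication complexity $\Theta(\log n)$): Alice holding $x$ and Bob holding $y$ collaboratively construct a permutation of $[\Theta(n)]$ whose containment of $12$ encodes $x < y$, via \cref{p:cc-lb}. The main design hurdle, and the reason the proof is (as the paper hints) ``a little more involved'' than the two-element adversarial argument used for $\SeqPPM$, is fixing the partition of values between Alice's and Bob's contributions so that it is independent of the inputs while still preserving the encoding of the comparison.
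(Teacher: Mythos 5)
Your primary approach has a genuine gap that you yourself flag, and the gap is real: showing $q_m = q_{m'}$ for two spine states does not directly yield a contradiction, because the off-spine continuation $P_m\cdot(n-m', n-m'-1,\dots,1)$ reaches state $q_n$ after only $n-(m'-m)$ input symbols, whereas the on-spine trajectory reaches $q_n$ after all $n$ symbols. The two trajectories that share the state $q_n$ thus have different remaining lengths and different remaining value sets, so ``$q_n$ is simultaneously accepting and rejecting'' never materializes; the automaton is simply in a state from which one length-$(m'-m)$ suffix must accept, which is perfectly consistent with $q_n\notin F$. The paper closes exactly this gap by matching \emph{consecutive pairs} of states, $(q_i,q_{i+1})=(q_j,q_{j+1})$, rather than single states: one then swaps the two corresponding input values, which yields a \emph{full-length} permutation that traces the same state sequence, lands in the same final state $q_n$, and yet contains the pattern while the original does not. (Concretely, this is the swap argument applied to the increasing input and pattern $21$, which is the complement of your setting.) Your ``states pairwise distinct'' claim is strictly stronger than what is needed and, as your own analysis shows, harder to establish; matching pairs is the right granularity. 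Replacing single-state coincidences with pair coincidences would repair the argument.

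Your backup plan cannot work even in principle. \Cref{p:cc-ppm-easy} shows that the one-way two-party communication complexity of $\pi$-$\PPM_n$ is a single bit whenever $|\pi|\le 3$, and the same argument applies to the multi-round interleaved protocol underlying \cref{p:cc-lb}: because Alice knows her own substrings and the value sets of Bob's substrings (and symmetrically), both players can privately detect any occurrence of a length-$\le 3$ pattern that straddles the boundary. Hence no two-party communication reduction (\textsc{Greater-Than} or otherwise) can yield a superconstant lower bound for $\pi=12$; the ``partition of values'' hurdle you identify is not merely delicate but insurmountable. Separately, your remark that $k=3$ is ``easier since $123$ has many avoiders'' points in the wrong direction (more avoiders generally makes distinguishing them \emph{harder}); the paper instead reduces $321$ to $21$ cleanly by prepending the maximum value $n+1$.
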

\begin{proof}
	Set $\pi = 21$ for now. Let $\Algo{A}$ be a streaming algorithm $\Algo{A}$ for $\PPM_n$ with $m$ states. In the following, we show that $m \ge \tfrac{n}{2} + 1$, implying that $\log n$ bits of space are necessary.
	
	Let $q_0, q_1, \dots, q_n$ be the sequence of states $\Algo{A}$ takes when the input is the increasing permutation $12\dots n$. Observe that states may repeat. However, we claim that pairs of consecutive states must be pairwise distinct. Indeed, suppose we have $q_i = q_j$ and $q_{i+1} = q_{j+1}$ for two distinct $i, j \in \{0,1,\dots, n\}$. Then swapping $i$ and $j$ in the input will not change the behavior of~$\Algo{A}$, even though the sequence is no longer monotone.
	
	Our claim implies that the state sequence is a so-called \emph{Davenport-Schinzel sequence} of order~2~\cite{AgarwalSharir2000DavenportSchinzel}. It is well known that if such a sequence consists of $m$ distinct symbols (corresponding to \emph{states} here), then its length is at most $2m-1$. Thus, we have $n+1 \le 2m-1$, implying that $m \ge \tfrac n 2 + 1$, as desired.
	
	The case $\pi = 321$ reduces to 21 by simply adding the value $n+1$ at the start. The cases $\pi \in \{12, 123\}$ are symmetric.
\end{proof}

Overall, since $\pi$-$\PPM$ cannot be harder than $\pi$-$\SeqPPM$, we have:
\restateMon*

\section{Lower bounds for non-monotone patterns}\label{sec:ppm:lb}

In this section, we prove \cref{p:lb-easy,p:lb-non-mon,p:lb-non-mon-seq}.
We first prove a useful reduction, essentially showing that \emph{appending} or \emph{prepending} a value to the pattern cannot only make our pattern matching problems (much) easier.

\begin{lemma}\label{p:red}
	Let $\pi$ be a pattern of length $k \ge 1$ and let $\pi'$ be a pattern of length $k+1$ such that $\pi' = \sigma x$ or $\pi' = x \sigma$ for some $x \in [n]$ and a sequence $\sigma$ that is order-isomorphic to $\pi$.
	
	Fix a number $p \in \N_+$ of passes.
	Suppose that $\pi'$-$\textsc{SequencePPM}_n$ ($\pi'$-$\textsc{PPM}_n$) can be solved with $f(n)$ bits of space and $p$ passes for every $n$. Then, $\pi$-$\textsc{SequencePPM}_n$ ($\pi$-$\textsc{PPM}_n$) can be solved with $f(2n) + \fO(\log n)$ bits of space and $p$ passes for every~$n$.
\end{lemma}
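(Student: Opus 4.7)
The plan is to build an on-the-fly streaming reduction: from an input $\tau$ for $\pi$-$\PPM_n$ (or $\pi$-$\SeqPPM_n$) I will produce a sequence $\tau'$ over $[2n]$ of length at most $2n$ such that $\tau'$ contains $\pi'$ if and only if $\tau$ contains $\pi$, and feed $\tau'$ into the assumed $\pi'$-solver. The solver consumes $f(2n)$ bits; the translation itself only needs a one-bit phase flag and an $\fO(\log n)$-bit counter used to emit the filler, giving the claimed $f(2n) + \fO(\log n)$ total. The same reduction is re-run identically on each of the $p$ passes.

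I describe the construction for $\pi' = \sigma x$; the case $\pi' = x\sigma$ is symmetric, with the filler prepended rather than appended. Set $\tau' = \hat\tau \beta$, where $\hat\tau$ is $\tau$ rescaled into either the odd or the even values of $[2n]$ (preserving the relative order of $\tau$) and $\beta$ is a monotone listing of the remaining $n$ values. Two parameters have to be pinned down from $\pi'$. First, the \emph{parity} is picked so that, for every possible $\pi$-occurrence in $\hat\tau$, the opposite-parity set $\beta$ contains a value of the correct rank $r$ (the rank of $x$ in $\pi'$) to complete it: concretely, $\hat\tau$ uses the even values when $r = 1$ (so $1 \in \beta$ is available to be smaller than anything), and the odd values when $r = k+1$ (so $2n \in \beta$ is available to be larger), while for intermediate $r$ either parity works because consecutive same-parity values always differ by two. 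Second, the \emph{order} of $\beta$ is increasing if $\pi'(k) > \pi'(k+1)$, and decreasing otherwise.

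Correctness is by a case analysis on where the positions $i_1 < \dots < i_{k+1}$ of a hypothetical $\pi'$-occurrence in $\tau'$ lie. Because $\beta$ is a contiguous block at the end, there is an $s$ with $i_1,\dots,i_s \in \hat\tau$ and $i_{s+1},\dots,i_{k+1} \in \beta$. If $s = k+1$, then $\tau$ already contains $\pi' \supseteq \pi$. If $s \le k-1$, then the last $k+1-s \ge 2$ values of the occurrence are monotone in $\beta$'s order, forcing the length-$(k+1-s)$ suffix of $\pi'$ to be monotone the same way; but this suffix contains the pair $\pi'(k)\pi'(k+1)$, which was chosen to go the opposite direction, a contradiction. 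The only surviving case is $s = k$, in which the first $k$ positions form an occurrence of $\sigma$ (hence of $\pi$) in $\hat\tau$, and therefore in $\tau$. Conversely, for any $\pi$-occurrence $v_1 < \dots < v_k$ in $\hat\tau$, the parity choice guarantees an opposite-parity value in $\beta$ lying in the interval $(v_{r-1}, v_r)$ (with the convention $v_0 = 0$, $v_{k+1} = 2n+1$), and this value together with the $\pi$-occurrence realises $\pi'$ in $\tau'$.

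The step I expect to be the main obstacle is calibrating the parity choice and the order of $\beta$ so that both properties hold simultaneously: a value of the required rank $r$ must always exist in $\beta$ (this is the constraint forcing the parity rule at the boundary ranks $r \in \{1, k+1\}$), and $\beta$'s monotonicity must block every cross-boundary spurious $\pi'$-occurrence (this is the constraint pinning down the order of $\beta$). Once these choices are made, the implementation is mechanical, and the same argument handles both $\PPM$ and $\SeqPPM$: in the latter case $\hat\tau$ only uses as many values of the chosen parity as there are values in $\tau$, so $|\tau'| = |\tau| + n \le 2n$ as required.
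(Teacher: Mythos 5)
Your construction is essentially the paper's: rescale $\tau$ onto one parity class of $[2n]$, append (or prepend) the other parity class as a monotone block, and observe that the monotone block can contribute at most one element to any $\pi'$-occurrence, so a crossing occurrence collapses to a $\pi$-occurrence inside $\hat\tau$. The only presentational difference is that the paper normalizes first (assuming, via \cref{p:complement}, that $\pi'$ ends with a descent, which lets it fix $\hat\tau$ to be even and $\beta$ to be $1,3,\dots,2n-1$), whereas you derive the parity of $\hat\tau$ from the rank $r=\pi'(k+1)$ and the direction of $\beta$ from the sign of $\pi'(k)-\pi'(k+1)$ explicitly; both are correct and your version avoids the detour through complementation.
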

\begin{proof}
	Let $\pi' = \sigma x$; the other case is similar.
	Take an algorithm $\Algo{A}$ that solves $\pi'$-$\textsc{PPM}_{2n}$ with $f(2n)$ bits of space. We now give an algorithm for $\pi$-$\textsc{PPM}_n$. Without loss of generality, assume that $\pi'(k-1) > \pi'(k)$.
	
	Let $\tau$ be the input. Compute $\tau'$ by first taking every value in $\tau$, in order, and doubling it; and then append the sequence $1, 3, 5, \dots, 2n-1$. Observe that $\tau'$ is a permutation on $[2n]$ and can be constructed one value at the time as $\tau$ arrives, using $\fO(\log n)$ space. The algorithm for $\pi$-$\textsc{PPM}_n$ consists of feeding $\tau'$ into $\Algo{A}$. The algorithm clearly requires $f(2n) + \fO(\log n)$ space. We now show correctness.
	
	If $\tau$ contains $\pi$, then clearly $\tau'$ also contains $\pi'$. On the other hand, if $\tau'$ contains $\pi'$, then either the occurrence of $\pi'$ is within the first $n$ elements of $\tau'$ (which implies $\tau$ contains $\pi'$, and thus $\pi)$, or, since $\pi'(k-1) > \pi'(k)$, only the last element of $\pi'$ is mapped to the second half of $\tau'$ (which implies that $\tau$ contains $\pi$). Thus, our algorithm is correct.
	
	The proof for $\textsc{SequencePPM}$ is essentially the same.
\end{proof}

It seems likely that the space complexity of $\pi'$-$\textsc{SequencePPM}_n$ and $\pi'$-$\textsc{PPM}_n$ is, in fact, \emph{monotone} under taking sub-patterns (say, for large enough $n$); however, this seems more difficult to prove and is not needed to show our bounds.

Combining \cref{p:red} with \cref{p:lb-mon2} already yields a $\Omega(\log n)$ bound for every pattern $\pi$, thus proving \cref{p:lb-easy}.
We now move on to the lower bound for $\SeqPPM{}$.

\restateNonMonSeq*

By \cref{p:red}, it suffices to show a lower bound for non-monotone patterns of length 3. We show the stronger lower bound in the multi-pass model, as mentioned in the introduction.

\begin{figure}
	\centering
	\begin{tikzpicture}[
			scale=0.25,
			point/.style={fill, circle, inner sep=1pt}
		]
		\input{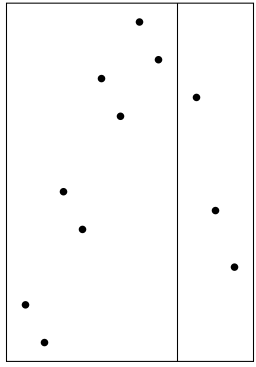}
	\end{tikzpicture}
	\caption{Construction of $\tau = f_1(S) g_1(S)$ for \cref{p:lb-non-mon-seq-3} with $S = \{1,3,5,6\}$ and $T = \{2,3,5\}$.}\label{fig:red:seq-312}
\end{figure}

\begin{lemma}\label{p:lb-non-mon-seq-3}
	The space complexity of $\pi$-$\textsc{SequencePPM}_n$ with $p$ passes is $\Omega(n/p)$ bits for each non-monotone pattern $\pi$ of length three.
\end{lemma}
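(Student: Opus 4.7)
The plan is to instantiate \cref{p:cc-lb} with $k=1$ and (the complement of) the disjointness problem $\CCDisj_n$, following the template suggested by \cref{fig:red:seq-312}. By \cref{p:complement}, the patterns $312$ and $132$ have equal space complexity, and similarly for $213$ and $231$. Moreover, the proof of \cref{p:cc-hard-reverse} extends verbatim to $\SeqPPM$: given witnesses $f_i, g_i$ for $k$-$Q$-hardness of $\pi$, one obtains witnesses for $\pi^R$ by swapping the roles of Alice and Bob and reversing each word, giving $k$-$Q'$-hardness for $Q' = \{(b, a) : (a, b) \in Q\}$. Since $\CCDisj_n$ is symmetric, $Q' = Q$, so it suffices to handle a single representative; I will take $\pi = 312$.

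For $\pi = 312$, I would define $f_1(S)$ as the concatenation, over $i \in S$ in increasing order, of the decreasing pair $(3i, 3i-2)$, and $g_1(T)$ as the concatenation, over $i \in T$ in \emph{decreasing} order, of the singletons $3i-1$. All values are distinct elements of $[3n]$, so $\tau := f_1(S)\, g_1(T)$ is a valid input to $312$-$\SeqPPM_{3n}$. For any $i \in S \cap T$, the values $(3i, 3i-2, 3i-1)$ appear in $\tau$ in this order, forming a $312$-occurrence; this handles the easy direction.

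The heart of the argument is the converse: any $312$ in $\tau$ must arise from some $i \in S \cap T$. The plan is a case analysis based on how the three positions of the purported occurrence are distributed between $f_1(S)$ and $g_1(T)$. The key structural observation is that the blocks of $f_1(S)$ are arranged so that every value in an earlier block is strictly smaller than every value in a later block; consequently, any $21$ pair inside $f_1(S)$ must lie within a single block $(3i, 3i-2)$, and the ``middle'' value $3i-1$ needed to complete such a $21$ into a $312$ is never in $f_1(S)$. This rules out $312$-occurrences entirely in $f_1(S)$. An occurrence entirely in $g_1(T)$ is impossible because $g_1(T)$ is strictly decreasing (giving only $321$); an occurrence with only its first position in $f_1(S)$ fails because its last two values would then lie in $g_1(T)$ and be decreasing rather than increasing. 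The remaining case, with two Alice positions and one Bob position, forces the two Alice positions to come from a single block $(3i, 3i-2)$, and the Bob value $v_3$ must then satisfy $3i-2 < v_3 < 3i$, so $v_3 = 3i-1$, which lies in $g_1(T)$ precisely when $i \in T$, forcing $i \in S \cap T$.

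Putting everything together, $\tau$ contains $312$ iff $(S, T) \in \overline{\CCDisj_n}$, so $312$-$\SeqPPM_{3n}$ is $1$-$\overline{\CCDisj_n}$-hard. Combined with $\DCC(\overline{\CCDisj_n}) = \DCC(\CCDisj_n) = n+1$, \cref{p:cc-lb} yields the claimed $\Omega(n/p)$ bound (the factor $3$ from using $[3n]$ in place of $[n]$ is absorbed in the asymptotic). The main obstacle I anticipate is the case analysis, specifically the ``stacked blocks'' observation that prevents a $21$ in $f_1(S)$ from spanning two distinct blocks; this is where the specific spacing of the triples $\{3i-2, 3i-1, 3i\}$ is essential.
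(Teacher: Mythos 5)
Your proposal is correct and follows essentially the same approach as the paper: the same reduction from disjointness with $f_1(S)$ consisting of blocks $(3i,3i-2)$ for $i\in S$ in increasing order and $g_1(T)$ consisting of $3i-1$ for $i\in T$ in decreasing order, and the same reasoning in both directions. Your treatment is slightly more careful than the paper's in two small respects—you explicitly note that the reduction is to the \emph{complement} of $\CCDisj_n$ (the paper glosses over this, harmlessly, since $\DCC$ of a relation and its complement coincide) and you spell out the extension of \cref{p:cc-hard-reverse} to $\SeqPPM$—but the substance is identical.
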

\begin{proof}
	We show that 312-$\SeqPPM_{3n}$ is 1-$\CCDisj_n$ hard, which suffices to prove our claim by \cref{p:cc-lb,p:complement,p:cc-hard-reverse}.
	Let $(S,T)$ be an input for $\textsc{Disj}_n$. Construct the sequence $f_1(S)$ as follows (see \cref{fig:red:seq-312}). For each $i \in S$, in \emph{ascending} order, add the values $3i, 3i-2$. Further, construct the sequence $g_1(T)$ by adding $3i-1$ for each $i \in T$, in \emph{descending} order. We now show that $\tau = f_1(S) g_1(T)$ contains 312 if and only if there is some $i \in S \cap T$.
	
	First suppose that $i \in S \cap T$; then $\tau$ contains the subsequence $3i, 3i-2, 3i-1$, which is an occurrence of 312. Second, suppose $\tau$ contains an occurrence $(a,b,c)$ of 312. Since $g_1(T)$ is descending, we cannot have $b, c \in g_1(T)$, implying $a, b \in f_1(S)$. Since $a > b$, and all descending pairs in $f_1(S)$ are of the form $(3i, 3i-2)$, we have $a = 3i$ and $b = 3i-2$ for some $i \in [n]$, implying $i \in S$. Now we must have $c = 3i-1$, which implies that $i \in T$, and we are done.
\end{proof}

We now formally prove the following, which includes \cref{p:lb-non-mon-seq} as the special case $p = 1$.

\begin{theorem}\label{p:lb-non-mon-seq-passes}
	The space complexity of $\pi$-$\SeqPPM_n$ with $p$ passes for each non-monotone pattern $\pi$ is $\Omega_\pi(n/p - \log n)$ bits.
\end{theorem}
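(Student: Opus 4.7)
My plan is to induct on the length $k = |\pi|$, with the base case $k = 3$ supplied by \cref{p:lb-non-mon-seq-3} and the inductive step obtained from the contrapositive of \cref{p:red}.

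First I would record the small combinatorial fact that every non-monotone pattern $\pi$ of length $k \ge 4$ admits at least one non-monotone pattern of length $k-1$ as either its prefix $\pi(1) \pi(2) \cdots \pi(k-1)$ or its suffix $\pi(2) \pi(3) \cdots \pi(k)$. Suppose not: both are monotone, so (say) the prefix is increasing, meaning $\pi(1) < \pi(2) < \cdots < \pi(k-1)$. Then the relation $\pi(2) < \pi(3)$ already forces the suffix to be increasing as well, giving $\pi(2) < \pi(3) < \cdots < \pi(k)$, and combining the two chains yields $\pi(1) < \pi(2) < \cdots < \pi(k)$, contradicting non-monotonicity of $\pi$. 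The decreasing case is symmetric. Call the resulting non-monotone length-$(k-1)$ pattern $\pi''$; by construction $\pi = \pi'' x$ or $\pi = x \pi''$ for some value $x$, which is exactly the hypothesis needed to apply \cref{p:red} with the roles $\pi \leftarrow \pi''$ and $\pi' \leftarrow \pi$.

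Now I would carry out the induction. For the base case $k = 3$, \cref{p:lb-non-mon-seq-3} gives the even stronger bound $\Omega(n/p)$. For the inductive step, assume $\pi''$-$\SeqPPM_m$ with $p$ passes requires at least $c_{\pi''} (m/p - \log m)$ bits. If $\pi$-$\SeqPPM_n$ admitted a $p$-pass algorithm using $f(n)$ bits, then \cref{p:red} would yield a $p$-pass algorithm for $\pi''$-$\SeqPPM_m$ using $f(2m) + \fO(\log m)$ bits, so $f(2m) \ge c_{\pi''}(m/p - \log m) - \fO(\log m)$; substituting $n = 2m$ gives $f(n) = \Omega_\pi(n/p - \log n)$, with the multiplicative constant cut roughly in half compared to $\pi''$. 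After $k - 3$ such steps the constant has shrunk by a factor of $2^{k-3}$ and the accumulated $\fO(\log n)$ terms combine into a single $\fO(\log n)$, but since $k = |\pi|$ is a constant determined by $\pi$, both the factor and the hidden additive constants are absorbed cleanly into the $\Omega_\pi$ notation.

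I do not foresee a genuine technical obstacle: the combinatorial claim is elementary and \cref{p:red} already packages the nontrivial reduction. The main care point is purely bookkeeping---tracking that each inductive step doubles the input size and adds an $\fO(\log n)$ overhead, so the multiplicative constant in the bound degrades geometrically with $|\pi|$, which is precisely why the statement is phrased with $\Omega_\pi$ rather than with an absolute constant.
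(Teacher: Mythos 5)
Your proof is correct and follows exactly the paper's route: reduce a non-monotone pattern of length $k\ge 4$ to a non-monotone sub-pattern of length $k-1$ by dropping the first or last element, iterate down to length $3$ via \cref{p:red}, and invoke \cref{p:lb-non-mon-seq-3} as the base. The only cosmetic differences are that you spell out the (elementary) combinatorial fact the paper merely asserts, and you phrase the repeated application of \cref{p:red} as an induction rather than unrolling it to $f(2^{k-3}n)+\fO_k(\log n)$ in one shot; the bookkeeping and resulting constant degradation are identical.
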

\begin{proof}
	Observe that for each non-monotone pattern $\pi$ of length at least four, we can remove either the first or last element and the pattern stays non-monotone. Thus, every pattern $\pi$ of length $k \ge 3$ can be reduced to a non-monotone pattern $\rho$ of length 3 in $k-3$ such steps. By \cref{p:red}, if $\pi$-$\textsc{SequencePPM}_n$ can be solved with $f(n)$ space and $p$ passes, then $\rho$-$\SeqPPM_n$ can be solved with $f(2^{k-3}n) + \fO_k(\log n)$ space and $p$ passes. Our claim thus follows from \cref{p:lb-non-mon-seq-3}.
\end{proof}

\subsection{\texorpdfstring{$\CCDisj$}{Disj}-hardness of PPM}\label{sec:disj-hardness}

In this section, we show that for each non-monotone pattern $\pi$ of size four, $\pi$-$\PPM_m$ is 2-$\CCDisj_n$-hard for some $m \in \fO(n)$. By \cref{p:cc-lb}, this implies the $\Omega(n/p)$ lower bound for these patterns (with $p$ passes), and, as in the last section, we obtain the following, which includes \cref{p:lb-non-mon} as a special case:
\begin{theorem}
	The space complexity of $\pi$-$\PPM_n$ with $p$ passes for each non-monotone pattern $\pi$ of length at least four is $\Omega_\pi(n/p)$ bits.
\end{theorem}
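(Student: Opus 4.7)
The plan is to follow the blueprint already laid out by the authors: establish 2-$\CCDisj_n$-hardness of $\pi$-$\PPM_m$ for every non-monotone length-$4$ pattern $\pi$ (with $m \in \fO(n)$), apply \cref{p:cc-lb} with $k=2$ to obtain an $\Omega(\DCC(\CCDisj_n)/(4p)) = \Omega(n/p)$ space lower bound in the $p$-pass model, and then lift to arbitrary length $k \ge 4$. The lifting step uses \cref{p:red}: since a non-monotone pattern of length $k \ge 5$ remains non-monotone after deleting either its first or its last entry (for at least one of the two choices), we may iteratively trim $k-4$ entries until we land on a non-monotone length-$4$ pattern, paying only a multiplicative blow-up of $2^{k-4}$ in the input size $n$ and an additive $\fO_k(\log n)$ in space, both of which are absorbed by the pattern-dependent constant hidden in $\Omega_\pi$.

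For the length-$4$ core, I would first shrink the case analysis by symmetry. \cref{p:complement} (complement) and \cref{p:cc-hard-reverse} (reverse) let us restrict attention to orbit representatives of the Klein-four group generated by these two involutions acting on the $22$ non-monotone length-$4$ patterns, leaving only a handful of representatives (e.g.\ $1324$, $1342$, $1423$, $2143$, $2413$). For each representative, I would construct four functions $f_1, g_1, f_2, g_2$ such that the concatenation $f_1(S)\,g_1(T)\,f_2(S)\,g_2(T)$ is a permutation of $[m]$ and contains $\pi$ iff $S \cap T \ne \emptyset$. Since the input is required to be a full permutation, Alice and Bob must fix an \emph{a priori} partition of $[m]$; a natural template is to give Alice the odd values and Bob the even values, with $m = \Theta(n)$. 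In the first round Alice emits her \emph{anchors} — small monotone gadgets, one per element of $S$, positioned so that any one of them may play the role of certain entries of $\pi$ — and Bob answers by emitting his own anchors for $T$ in the complementary positions. In the second round each player emits their remaining, non-anchor values in a strictly monotone order chosen so that these \emph{fillers} cannot themselves participate in $\pi$-occurrences. Direction ($\Rightarrow$), that $i \in S \cap T$ forces an occurrence of $\pi$, is straightforward from the anchor design.

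The main obstacle is the converse direction ($\Leftarrow$) for each individual pattern: because the output is a permutation of \emph{all} of $[m]$, there are many candidate occurrences to exclude, and a putative occurrence of $\pi$ may draw its four entries from any combination of the four rounds (anchors and fillers alike). One must therefore tune both the shape of each anchor gadget and the global monotone order of the fillers so that (i) fillers alone never realize $\pi$, (ii) anchors alone from a single player never realize $\pi$, (iii) the only way to combine anchors from Alice with anchors from Bob into $\pi$ is when both players chose the same index $i$, and (iv) filler entries cannot ``repair'' a partial occurrence of $\pi$ that is missing an anchor. Verifying (i)--(iv) is the pattern-specific combinatorial heart of the argument, and is precisely why the authors describe the proof as consisting of ``several individual reductions'' rather than a single uniform construction.
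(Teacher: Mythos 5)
Your high-level blueprint matches the paper exactly: prove $2$-$\CCDisj_n$-hardness of $\pi$-$\PPM_m$ for every non-monotone length-$4$ pattern with $m \in \fO(n)$, apply \cref{p:cc-lb}, and then lift to length $k \ge 4$ via \cref{p:red} (the paper's \cref{p:lb-non-mon-seq-passes} uses precisely your observation that one endpoint can always be trimmed while keeping the pattern non-monotone). The symmetry reduction via \cref{p:complement} and \cref{p:cc-hard-reverse} is also what the paper does.

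The gap is that your sketch of the core reductions is not a proof, and moreover the uniform ``anchor $+$ filler'' template you propose does not match what actually works. Concretely: (a) the paper does not split $[m]$ into odds for Alice and evens for Bob; the constructions interleave Alice's and Bob's values by residues modulo $4$ (or modulo $3$ in the $4312$ case), with the residue assigned to each value determined by $\pi$ itself; (b) four of the seven representatives ($4231$, $4213$, $4132$, $4123$) are already $1$-$\CCDisj_n$-hard, so no second round is needed there, whereas $4312$, $3142$, and $2143$ genuinely require a second $f_2(S)$ block (with $g_2$ empty) --- your template imposes two full rounds on everything, which is not how the paper's constructions are organized; (c) the non-anchor values are not inert ``fillers'' placed in a single monotone run --- each non-selected index $i$ still contributes a small gadget (e.g.\ a swapped pair) whose interaction with the pattern must be ruled out, and these gadgets vary from pattern to pattern; and (d) your list of orbit representatives has only five classes, but the action of reverse and complement on the $22$ non-monotone length-$4$ patterns has seven orbits (the paper's representatives are $4231$, $4213$, $4132$, $4123$, $4312$, $3142$, $2143$), so the ``e.g.'' list omits two families that must also be handled. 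Most importantly, you explicitly defer the verification of properties (i)--(iv) for every pattern, but those verifications are the entire substance of \cref{prop:OLPPM_lower_bound_4abc}, \cref{prop:OLPPM_lower_bound_4213}, and \cref{prop:OLPPM_lower_bound_2143_2413}: each requires a careful case analysis locating where the four indices of a putative occurrence fall and showing they must collapse onto a common intersection index. Without supplying those arguments, the proposal is a plan rather than a proof.
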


We show $\CCDisj_n$-hardness for the concrete patterns 4231, 4213, 4132, 4123, 4312, 3142, and 2143.
By \cref{p:complement,p:cc-hard-reverse}, this covers all other non-monotone patterns of length four as well.

\begin{lemma} \label{prop:OLPPM_lower_bound_4abc}
	Let $\pi \in \{4231, 4213, 4132, 4123\}$. Then $\pi$-$\PPM_m$ is 1-$\CCDisj_n$-hard for each $n$ and $m = 4n$.
\end{lemma}
\begin{proof}
	\Cref{fig:streaming:red_OLSPPM_4231_4132} shows examples of our construction.
	Let $S, T \in [n]$. We define $f_1(S) = (a_1, a_2, \dots, a_{2n})$, where for $i \in [n]$,
	\begin{align*}
		(a_{2i-1, 2i}) = \begin{cases}
			(4(i-1) + \pi(1), 4(i-1) + \pi(2)), & \text{ if } i \in S, \\
			(4(i-1) + \pi(2), 4(i-1) + \pi(1)), & \text{ otherwise.}
		\end{cases}
	\end{align*}
	
	For $i \in [n]$, let $d(i) = i$ if $\pi = 4231$, and let $d(i) = n+1-i$ otherwise. Note that always $d(d(i)) = i$. We define $g_1(T) = (b_1, b_2, \dots, b_n)$, where for $i \in [n]$,
	\begin{align*}
		(b_{2i-1}, b_{2i}) = \begin{cases}
			(4(d(i)-1) + \pi(3), 4(d(i)-1) + \pi(4)), & \text{ if } d(i) \in T, \\
			(4(d(i)-1) + \pi(4), 4(d(i)-1) + \pi(3)), & \text{ otherwise.}
		\end{cases}
	\end{align*}
	
	Observe that the sequence $\tau = f_1(S) g_1(T)$ contains each value in $[4n]$ exactly once, i.e. $\tau$ is a $4n$-permutation.
	We need to prove that $\tau$ contains $\pi$ if and only if $S \cap T \neq \emptyset$. Suppose there is some $i \in S \cap T$. As $d(d(i)) = i \in T$, the sequence \[(a_{2i-1}, a_{2i},  b_{2d(i)-1}, b_{2d(i)}) = (4(i-1) + \pi(1), (4(i-1) + \pi(2), (4(i-1) + \pi(3), (4(i-1) + \pi(4))\] is a subsequence of $\tau$, implying that $\tau$ contains $\pi$.
	
	\begin{figure}[tbp]
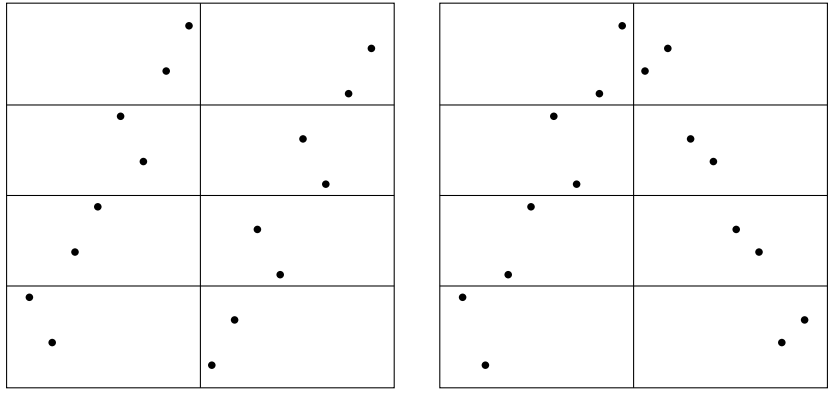

		\centering
		\begin{tikzpicture}[
			scale=0.3,
			point/.style={fill, circle, inner sep=1pt}
			]
			\input{figs/red_OLPPM_4231}
			\begin{scope}[shift={(19,0)}]
				\input{figs/red_OLPPM_4132}
			\end{scope}
		\end{tikzpicture}
		\caption{Construction of $\tau = f_1(S) g_1(S)$ for $\pi = 4231$ (left) and for $\pi = 4132$ (right) with $n = 4$, $S = \{1,3\}$ and $T = \{2,3\}$}
		\label{fig:streaming:red_OLSPPM_4231_4132}
	\end{figure}
	
	Now suppose $\tau$ contains $\pi$, so there are indices $i_1 < i_2 < i_3 < i_4$ such that the subsequence $\tau(i_1) \tau(i_2) \tau(i_3) \tau(i_4)$ is order-isomorphic to $\pi$.
	We need to show that $S \cap T \neq \emptyset$.
	
	We first show that $i_3 > 2n$, i.e., $\tau(i_3)$ appears in $g_1(T)$. Suppose that is not the case. Then, by definition of $f_1(S)$, there is at most one index $i < i_3$ such that $\tau(i) > \tau(i_3)$, namely $i = i_3 - 1$. As $\pi(1) > \pi(3)$, this implies $i_1 = i-1$, so $i-1 = i_1 < i_2 < i_3 = i$, a contradiction.
	
	Second, we show that $i_2 \le 2n$. Assume this is not the case and $\pi = 4231$. As $d(i) = i$, there is at most one index $i > i_2$ with $\tau(i) < \tau(i_2)$, namely $i = i_2 + 1$. As above, this is a contradiction. In the case that $\pi \neq 4231$, we have $\pi(2) < \pi(4)$ and $d(i)$ is decreasing, which similarly implies a contradiction.
	
	We now have $i_1 < i_2 \le 2n < i_3 < i_4$. As $\pi(1) > \pi(2)$, we know that $\tau(i_1) > \tau(i_2)$. This means $i_1 = 2i-1$ and $i_2 = 2i$ for some $i \in S$, as $f_1(S)$ contains no other decreasing pairs. Consequently, we have $\tau(i_j) = 4(i-1) + \pi(j)$ for $j \in [1,2]$.
	
	To conclude the proof, we show that $\tau(i_j) = 4(i-1) + \pi(j)$ for $j \in [3,4]$, which implies $i \in T$ and thus $i \in S \cap T$. We distinguish three cases.
	\begin{itemize}
		\item Let $\pi = 4231$. We have $\pi(2) < \pi(3) < \pi(1)$, thus
		\begin{align*}
			4(i-1) + 2 = \tau(i_2) < \tau(i_3) < \tau(i_1) = 4(i-1) + 4,
		\end{align*}
		implying that indeed $\tau(i_3) = 4(i-1) + 3$. Now $\tau(i_3), \tau(i_4)$ must be a decreasing pair, which leaves only $4(i-1) + 1$ as a possible value for $\tau(i_4)$, since $d(i)$ is increasing.
		
		\item Let $\pi = 4213$. By a similar argument as above, $\pi(2) < \pi(4) < \pi(1)$ implies that $\tau(i_4) = 4(i-1) + 3$. Now $\tau(i_3), \tau(i_4)$ must be an \emph{increasing} pair, and $d(i)$ is decreasing, so $\tau(i_3) = 4(i-1) + 1$.
		
		\item Finally, let $\pi \in \{4132, 4123\}$. Now $\pi(2) < \pi(3) < \pi(1)$ and $\pi(2) < \pi(4) < \pi(1)$, so $\tau(i_3), \tau(i_4) \in \{\tau(i_4) = 4(i-1) + \pi(3), \tau(i_4) = 4(i-1) + \pi(4)\} = \{b_{2i-1}, b_{2i}\}$. Since $i_3 < i_4$, we have $i_3 =  2i-1$ and $i_4 = 2i$, implying $i \in T$.\qedhere
	\end{itemize}
\end{proof}

\begin{figure}[tbp]
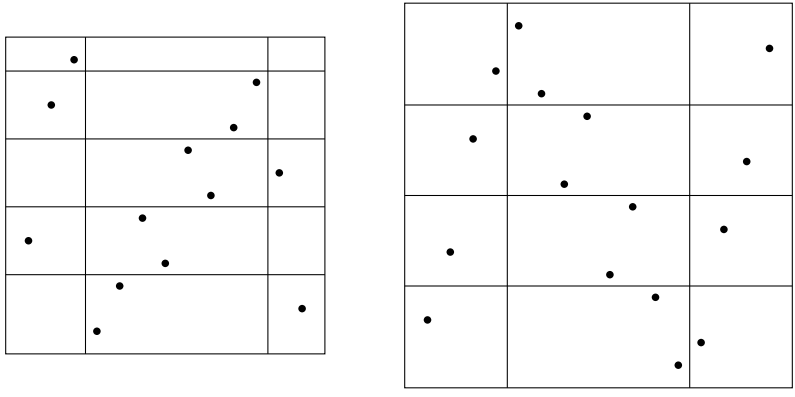

	\centering
	\begin{tikzpicture}[
		scale=0.3,
		point/.style={fill, circle, inner sep=1pt}
		]
		\begin{scope}[shift={(1.5, -17.5)}]
			\input{figs/red_OLPPM_4312}
		\end{scope}
		\begin{scope}[shift={(19, -19)}]
			\input{figs/red_OLPPM_3142}
		\end{scope}
	\end{tikzpicture}
	\caption{Construction of $\tau = f_1(S) g_1(S) f_2(T)$ for $\pi = 4312$ (left) and for $\pi = 3142$ (right) with $n = 4$, $S = \{1,3\}$ and $T = \{2,3\}$.}
	\label{fig:streaming:red_OLSPPM_4312_3142}
\end{figure}

\begin{lemma} \label{prop:OLPPM_lower_bound_4213}
	$4312$-$\PPM_m$ is $2$-$\CCDisj_n$-hard for all $n$ and $m = 3n+1$.
\end{lemma}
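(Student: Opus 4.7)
The plan is to reduce from $\CCDisj_n$ with a three-chunk construction $\tau = f_1(S)\,g_1(T)\,f_2(S)$ (taking $g_2(T)=\epsilon$). Partition $[3n]$ into $n$ blocks $\{3i-2,3i-1,3i\}$ for $i\in[n]$, and call $3i-2$, $3i-1$, $3i$ the \emph{low}, \emph{middle}, \emph{high} of block $i$; reserve $3n+1$ as an \emph{anchor}. Let $f_1(S)$ list the middles of blocks $i\notin S$ in ascending order of $i$, followed by the anchor. Let $g_1(T)$ be the concatenation, over $i=1,\dots,n$ in ascending order, of the pair $(3i,3i-2)$ if $i\in T$ and $(3i-2,3i)$ otherwise. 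Let $f_2(S)$ list the middles of blocks $i\in S$ in descending order of $i$. Then $\tau$ is a permutation of $[3n+1]$ of length $m=3n+1$, and it remains to show that $\tau$ contains $4312$ iff $S\cap T\ne\emptyset$.

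\emph{Completeness.} If $t\in S\cap T$, then $(3n+1,\,3t,\,3t-2,\,3t-1)$ is a subsequence of $\tau$, with entries drawn from $f_1(S)$, $g_1(T)$, $g_1(T)$, $f_2(S)$ respectively; its relative order is $4312$.

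\emph{Soundness.} Write $A=f_1(S)$, $B=g_1(T)$, $C=f_2(S)$, and suppose $4312$ is witnessed by $i_1<i_2<i_3<i_4$ with $\tau(i_3)<\tau(i_4)<\tau(i_2)<\tau(i_1)$. Since $A$ is strictly increasing, none of $i_2,i_3,i_4$ lies in $A$; since $C$ is strictly decreasing and $\tau(i_3)<\tau(i_4)$, $i_3$ does not lie in $C$. The key structural fact about $B$ is: whenever $j_1<j_2$ are both in $B$ and $\tau(j_1)>\tau(j_2)$, they sit in a common block $t$ (because distinct blocks appear in $B$ in strictly increasing order of value), and that block satisfies $t\in T$ with $(\tau(j_1),\tau(j_2))=(3t,3t-2)$. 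Applied to $(i_2,i_3)$, this yields some $t\in T$ with $\tau(i_2)=3t$ and $\tau(i_3)=3t-2$; hence $\tau(i_4)\in(3t-2,3t)$, forcing $\tau(i_4)=3t-1$. Since the middle $3t-1$ never appears in $B$, we must have $i_4\in C$, which by definition of $f_2(S)$ requires $t\in S$. Thus $t\in S\cap T$.

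\textbf{Main obstacle.} What takes care is ruling out the other positions of $(i_1,\dots,i_4)$ across $A,B,C$. The case $i_1\in B$ is eliminated by applying the same block-monotonicity fact to $(i_1,i_2)$: both would be pinned in one $T$-block $t$ with $\tau(i_2)=3t-2$, after which any later $i_3\in B$ must lie in a strictly later block and satisfy $\tau(i_3)>3t>3t-2$, contradicting $\tau(i_3)<\tau(i_2)$. The case $i_1\in A,\ i_4\in B$ is eliminated similarly: $i_4$ would sit in a strictly later block than the one containing $i_2,i_3$, so $\tau(i_4)>\tau(i_2)$ instead of $<\tau(i_2)$. Only the intended configuration $i_1\in A,\ i_2,i_3\in B,\ i_4\in C$ survives, and it produces $S\cap T\ne\emptyset$. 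The whole argument hinges on a single clean observation — that the middle of each block is deliberately absent from $B$ — so that the `$2$' of any $4312$ occurrence is forced to live in $C$, which encodes precisely membership in $S$.
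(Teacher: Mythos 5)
Your construction is identical to the paper's (after translating $3i-2,3i-1,3i$ back to $3(i-1)+1,3(i-1)+2,3(i-1)+3$): $f_1(S)$ is the ascending middles of $S^{\mathrm{c}}$ plus the anchor $3n+1$, $g_1(T)$ is the blockwise pairs with $(\text{high},\text{low})$ for $i\in T$ and $(\text{low},\text{high})$ otherwise, $f_2(S)$ is the descending middles of $S$, and $g_2=\epsilon$. The soundness argument is also essentially the paper's: you phrase the key fact as block-monotonicity of $g_1$ (decreasing pairs live in a single $T$-block) and observe that the forced value $3t-1$ is a middle absent from $g_1$, while the paper states the same facts as "$g_1$ contains no $321$" and "$g_1$ contains no $312$" together with position bounds; these are the same observations in different clothing, so this is the same proof.
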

\begin{proof}
	\Cref{fig:streaming:red_OLSPPM_4312_3142} (left) shows an example of the construction.
	Let $S, T \subseteq [n]$. Let $f_1(S)$ consist of the values $3(i-1) + 2$ for $i \notin S$ in ascending order, and the value $3n+1$ at the end. Let $g_1(T) = b_1 b_2 \dots, b_{2n}$, where
	\begin{align*}
		(b_{2i-1}, b_{2i}) = \begin{cases}
			(3(i-1) + 3, 3(i-1) + 1), & \text{ if } i \in T, \\
			(3(i-1) + 1, 3(i-1) + 3), & \text{ otherwise}.
		\end{cases}
	\end{align*}
	
	Finally, let $f_2(S)$ contain the values $3(i-1) + 2$ for $i \in S$, in \emph{descending} order, and let $g_2(T)$ be empty. The sequence $\tau = f_1(S) g_1(T) f_2(S)$ is clearly a $(3n+1)$-permutation. We claim that $\tau$ contains 4231 if and only if $S \cap T \neq \emptyset$.
	
	Suppose $i \in S \cap T$. Then $b_{2i-1} = 3(i-1) + 3$, $b_{2i} = 3(i-1) + 1$ and the value $3(i-1)+2$ occurs in $f_2(S)$. The value $3n+1$ always occurs in $f_1(S)$. These four values together form an occurrence of $4312$ in $\tau$.
	
	Now suppose 4312 occurs in $\tau$, so there are indices $i_1 < i_2 < i_3 < i_4$ such that the subsequence $\tau(i_1) \tau(i_2) \tau(i_3) \tau(i_4)$ is order-isomorphic to 4312. Let $k_1 = |f_1(S)|$ and $k_2 = |g_1(T)|$. First, as $f_1(S)$ is increasing and  $f_2(S)$ is decreasing, we know $k_1 < i_2$ and $i_3 \le k_1+k_2$. Second, we claim that $i_1 \le k_1$. Assume this is not the case. Then $k_1 < i_1 < i_2 < i_3 < k_2$, so the decreasing sequence $\tau(i_1) \tau(i_2) \tau(i_3)$ is a subsequence of~$g_1(T)$. However, $g_1(S)$ does not contain 321, a contradiction. Third, $k_1+k_2 < i_4$, as $g_1(T)$ does not contain the pattern 312. In conclusion, we have $i_1 \le k_1 < i_2 < i_3 \le k_1+k_2 < i_4$.
	
	By definition of $g_1(T)$ and using $\tau(i_2) > \tau(i_3)$, we know that there is some $i \in T$ such that $\tau(i_2) = 3(i-1) + 3$ and $\tau(i_3) = 3(i-1) + 1$. Thus $\tau(i_4) = 3(i-1) + 2$. This value occurring in $f_2(S)$ implies that $i \in S$.
\end{proof}

\begin{lemma} \label{prop:OLPPM_lower_bound_2143_2413}
	$3142$-$\PPM_m$ and $2143$-$\PPM_m$ are $2$-$\CCDisj_n$-hard for all $n$ and $m = 4n$.
\end{lemma}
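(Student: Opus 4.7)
The plan is to mimic the structure of the preceding proofs by constructing, for each $\pi \in \{3142, 2143\}$, maps $f_1(S), g_1(T), f_2(S), g_2(T)$ such that $\tau = f_1(S)\, g_1(T)\, f_2(S)\, g_2(T)$ is a $4n$-permutation containing $\pi$ precisely when $S \cap T \ne \emptyset$. I will partition $[4n]$ into $n$ groups $B_i = \{4(i-1)+1, \ldots, 4(i-1)+4\}$ and write $w_i^1 < w_i^2 < w_i^3 < w_i^4$ for their elements; the intended occurrence of $\pi$, when $i \in S \cap T$, will consume all four values of $B_i$.

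For $\pi = 3142$ I take $g_2(T)$ empty and define $f_1(S)$ to list, for $i = 1, \ldots, n$ in ascending order, the value $w_i^3$ if $i \in S$ and $w_i^2$ otherwise; $g_1(T)$ to list, for $i = n, n-1, \ldots, 1$ in descending order, the ordered pair $(w_i^1, w_i^4)$ if $i \in T$ and $(w_i^4, w_i^1)$ otherwise; and $f_2(S)$ to list, in ascending order of $i$, whichever of $w_i^2, w_i^3$ was not used by $f_1(S)$. The forward direction is then immediate: if $i \in S \cap T$, the subsequence $w_i^3, w_i^1, w_i^4, w_i^2$ appears in $\tau$ and is order-isomorphic to $3142$.

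For the converse, suppose $\tau$ has a matched occurrence $i_1 < i_2 < i_3 < i_4$ of $3142$ with values $v_2 < v_4 < v_1 < v_3$. The argument will rest on three structural observations. First, $f_1(S)$ and $f_2(S)$ are each strictly increasing, so neither of the descending pairs $(v_1, v_2)$ and $(v_3, v_4)$ can be contained entirely in a single one of them. Second, inside $g_1(T)$ the minimum of group $B_i$ strictly exceeds the maximum of $B_{i-1}$, so any ascending pair inside $g_1(T)$ must lie inside a single group $B_i$; inside such a group the only ascending pair is $(w_i^1, w_i^4)$, and it appears only when $i \in T$. Third, combining these facts, the ascending pair $(v_2, v_3)$ must lie inside one group $B_i$ with $v_2 = w_i^1$, $v_3 = w_i^4$, and $i \in T$. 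The values $v_1, v_4$ then satisfy $w_i^1 < v_4 < v_1 < w_i^4$, so both lie in $\{w_i^2, w_i^3\}$; since earlier groups of $g_1(T)$ carry too-large values and later groups too-small ones, $v_1$ must come from $f_1(S)$ and $v_4$ from $f_2(S)$. The inequality $v_4 < v_1$ then forces $v_1 = w_i^3$ and $v_4 = w_i^2$, which by construction only happens when $i \in S$, giving $i \in S \cap T$.

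For $\pi = 2143$ I will use the same construction with the roles of $w_i^2$ and $w_i^3$ swapped in $f_1$ and $f_2$. The forward direction uses the subsequence $w_i^2, w_i^1, w_i^4, w_i^3$, order-isomorphic to $2143$; the backward analysis reproduces the three observations above verbatim through the pair $(v_2, v_3)$, after which the relation $v_1 < v_4$ specific to $2143$ forces $v_1 = w_i^2$ and $v_4 = w_i^3$, which once again requires $i \in S$. The main obstacle will be carrying out the backward case analysis carefully enough to exclude every way the four matched positions might distribute across $f_1(S), g_1(T), f_2(S)$ other than the intended one; the key leverage is that the two outer parts are monotone and the middle part decreases ``between groups'', so each alternative distribution is blocked by one of the three observations.
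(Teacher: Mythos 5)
Your proposal is correct and follows essentially the same construction as the paper: one middle value per group in each of $f_1(S)$ and $f_2(S)$ (the assignment encoding $S$), and the extreme pair of each group placed in $g_1(T)$ in descending group order, ascending within the group iff the group index lies in $T$, with the same three structural observations driving the converse. Your version is in fact stated a bit more cleanly than the paper's, which has a minor indexing slip in the definition of $g_1$ (the paper writes $4(n-i)$ for the values while testing $i \in T$, but the intent — group $i$'s pair is ascending iff $i \in T$, groups listed in descending order — is exactly what you wrote).
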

\begin{proof}
	\Cref{fig:streaming:red_OLSPPM_4312_3142} (right) shows an example of the construction.
	Let $S, T \subseteq [n]$. Let $f_1(S) = a_1 a_2 \dots a_n$ and $f_2(S) = c_1 c_2 \dots c_n$, where for $i \in [n]$,
	\begin{align*}
		& a_i = \begin{cases}
			4(i-1) + \pi(1), & \text{ if } i \in S, \\
			4(i-1) + \pi(4), & \text{ otherwise.}
		\end{cases} \\
		& c_i = \begin{cases}
			4(i-1) + \pi(4), & \text{ if } i \in S, \\
			4(i-1) + \pi(1), & \text{ otherwise.}
		\end{cases}
	\end{align*}
	
	Let $g_1(S) = b_1 b_2 \dots b_{2n}$ where for $i \in [n]$,
	\begin{align*}
		(b_{2i-1}, b_{2i}) = \begin{cases}
			(4(n-i) + 1, 4(n-i) + 4), & \text{ if } i \in T, \\
			(4(n-i) + 4, 4(n-i) + 1), & \text{ otherwise}.
		\end{cases}
	\end{align*}
	
	Let $g_2(T)$ be empty. The sequence $\tau = f_1(S) g_1(T) f_2(S)$ is clearly a $4n$-permutation.
	We show that $\tau$ contains $\pi$ if and only if $S \cap T \neq \emptyset$.
	
	First, suppose there is some $i \in S \cap T$. Then $(a_{i}, b_{2i-1}, b_{2i}, c_{i})$ is an occurrence of $\pi$ in $\tau$.
	
	Now suppose $\tau$ contains $\pi$, so there are indices $i_1 < i_2 < i_3 < i_4$ such that the subsequence $\tau(i_1) \tau(i_2) \tau(i_3) \tau(i_4)$ is order-isomorphic to $\pi$. Both $f_1(S)$ and $f_2(S)$ are increasing. As $\pi(1) > \pi(2)$ and $\pi(3) > \pi(4)$, we have $n < i_2 \le i_3 \le 3n$. Moreover, since $\pi(2) < \pi(3)$, we know that $(i_2, i_3) = (n+2i-1, n+2i)$ for some $i \in T$, as there are no other increasing pairs in $g_1(T)$. This means $\tau(i_2) = 4(i-1) + 1$ and $\tau(i_3) = 4(i-1) + 4$.	Now, as $\pi(1)$ and $\pi(4)$ are between $\pi(2) = 1$ ad $\pi(3) = 4$, we have $\tau(i_1) = 4(i-1) + \pi(1)$ and $\tau(i_4) = 4(i-1) + \pi(4)$, which is only possible if $i \in S$.
\end{proof}

\section{Algorithms for size-3 patterns}\label{sec:ppm:size-3-ub}

In this section, we show that the complexity of $\pi$-$\PPM_n$ is $\widetilde{\fO}(\sqrt{n})$ if $\pi \in \{312, 132, 231, 213\}$.
\restateNonMonThree*

We give an algorithm for 312 (which also holds for 132 by \cref{p:complement}) in \cref{sec:ppm:algo-312}, and a different algorithm for 231 (which also holds for 213) in \cref{sec:ppm:algo-213}.
Besides the $\fO(\sqrt{\log n})$ gap in space complexity, there is another interesting difference. The algorithm for 312 actually finds and reports the three values of the occurrence of 312, if any. On the other hand, the algorithm for 231 may not be able to report all values, and instead only reports whether an occurrence exists or not.

\subsection{An algorithm for 312-PPM}\label{sec:ppm:algo-312}

In this section, we give a $\fO( \sqrt{n \log n} )$-space algorithm for 312-$\PPM_n$.
Let $k = \lfloor \sqrt{n \log n} \rfloor$. Throughout the algorithm, we maintain the following data.
\begin{itemize}
	\itemsep0pt
	\item The highest value $h \in [n]$ encountered so far.
	\item A set $D$ of value pairs $(a,b)$ with $a>b$, each representing a decreasing pair encountered before.
	\item A set $A$ with all values in $\{h-k+1, h-k+2, \dots, h\}$ encountered so far.%
\end{itemize}

At the start, we read the first value $v$ and set $h = v$, $D \gets \emptyset$, and $A = \{v\}$.
\Cref{alg:312} shows pseudocode for one step of the algorithm, were we read a value $v$ and either update our data structures or report an occurrence of the pattern 312.

\begin{algorithm}
	\caption{One step of the 312-detection algorithm}\label{alg:312}
	\begin{algorithmic}[1]
		\Procedure{Step}{$v$}
		\If{$\exists (a,b) \in D: a > v > b$}\label{line:check-D}
		\State Report occurrence $(a,b,v)$ and stop\label{line:occ-D}
		\EndIf
		\If{$v > h$}
		\State Delete from $A$ all values smaller than $v-k$\label{line:delete-A}
		\State Insert $v$ into $A$.\label{line:insert-A-1}
		\State $h \gets v$\label{line:update-h}
		\ElsIf{$h-k < v \le h$}\label{line:if-into-A}
		\If{$\exists a, c: a \in A, c \notin A, a > c > v$}\label{line:if-report-A}
		\State Report occurrence $(a,v,c)$ and stop\label{line:occ-A}
		\Else
		\State Insert $v$ into $A$\label{line:insert-A-2}
		\EndIf
		\ElsIf{$v \le h-k$}
		\State Delete all $(a,b)$ with $b > v$ from $D$\label{line:delete-D}
		\State Insert $(h,v)$ into $D$\label{line:insert-D}
		\EndIf
		\EndProcedure
	\end{algorithmic}
\end{algorithm}

We now show that the following invariants hold until the algorithm stops. Let $\tau$ be the input permutation, and let $\sigma$ be the prefix of $\tau$ read so far.
\begin{enumerate}[(i)]
	\itemsep0pt
	\item $h$ is the highest value in $\sigma$.\label{inv:h}
	\item $A = \Val(\sigma) \setminus [h-k]$.\label{inv:A-exact}
	\item For each decreasing pair $(a,b)$ in $\sigma$ with $a > b > h-k$, there is no value $c$ in $\tau$ such that $a,b,c$ is an occurrence of 312 in $\tau$.\label{inv:A-no-occ}
	\item For each $(a,b) \in D$, we have $a-b \ge k$, and $(a,b)$ is a decreasing pair in $\sigma$.\label{inv:D-corr-high}
	\item For each $(a,b), (a',b') \in D$, we have $[b,a] \cap [b',a'] = \emptyset$.\label{inv:D-disjoint}
	\item For each occurrence $a,b,c$ of 312 in $\tau$, where $a$ and $b$ occur in $\sigma$, there is a pair $(a',b') \in D$ such that $a' \ge a > b \ge b'$.\label{inv:D-max}
\end{enumerate}

\begin{lemma}
	The above invariants are true at the start and after every step of the algorithm, except after a step where the algorithm reports an occurrence of 312.
\end{lemma}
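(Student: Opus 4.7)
I would proceed by induction on the number of steps executed by the algorithm. The base case, right after reading the first value, is immediate: every invariant either follows by direct inspection or holds vacuously, since there are no decreasing pairs in a one-element sequence and $D = \emptyset$.

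For the inductive step, I would branch on the four mutually exclusive branches of \cref{alg:312}: line~3 (report and stop, nothing to maintain), $v>h$, $h-k<v\le h$ with the check at line~8 failing, and $v\le h-k$. In the case $v>h$, the new value is itself the maximum of $\sigma'$, so no new decreasing pair ends in $v$; invariants (iii)--(vi) on pairs in $\sigma$ are preserved because the new threshold $h'-k=v-k$ is only more restrictive, and the update to $A$ follows from the set identity $A_{\mathrm{new}} = \Val(\sigma') \setminus [v-k]$ using (ii). In the case $h-k<v\le h$, the threshold $h-k$ is unchanged and the only new decreasing pairs in $\sigma'$ are of the form $(a,v)$ with $a\in A$ and $a>v$. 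The crux is invariant (iii) for these new pairs: any completing value $c$ for a 312 occurrence $a,v,c$ in $\tau$ would satisfy $v<c<a$ and appear after $v$, hence $c>h-k$; by the inductive (ii), such a $c$ lies outside the current $A$ precisely when it is a future value, which is exactly the condition tested at line~8. Failure of that test therefore rules out any such $c$, and invariant (vi) follows formally from (iii) because no new decreasing pair in $\sigma'$ could form the prefix of an actual 312.

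The case $v\le h-k$ is the one I expect to be trickiest. Invariants (i)--(iii) are immediate since $h$ is unchanged, $v\notin(h-k,h]$, and any new decreasing pair $(a,v)$ has $v\le h-k$ and so falls outside the scope of (iii). For (iv), $h-v\ge k$ is immediate. The real obstacle is the disjointness invariant (v): for every old pair $(a,b)\in D$ that survives the deletion at line~15 (i.e.\ $b\le v$), I need $a<v$, so that $[b,a]$ lies strictly below $[v,h]$. I would obtain this by exploiting that line~3 did \emph{not} fire: if some surviving $(a,b)$ had $a>v$, then $a>v>b$ would have triggered a report at line~3 (using $v\neq a,b$, which holds because $v\notin \Val(\sigma)$); together with $a\ne v$ this forces $a<v$. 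Finally, for invariant (vi), a 312 occurrence $a,b,c$ in $\tau$ with $a,b\in\sigma'$ is either an already-covered old occurrence whose inductive witness survives, or an old occurrence whose witness $(a',b')$ was deleted (which happens only when $b'>v$, and then $b\ge b'>v$ so the new pair $(h,v)$ covers it via $h\ge a>b\ge v$), or an occurrence with $b=v$, again covered by $(h,v)$. The case $a=v$ is impossible since $a$ must precede $b$ in $\tau$, while $v$ is last in~$\sigma'$.
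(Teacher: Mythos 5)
Your proposal is correct and takes essentially the same approach as the paper: establishing each invariant by induction, exploiting that line~3 not firing constrains $D$ (for disjointness (v)), that line~8 not firing rules out completions $c$ (for (iii)), and that deleted pairs in $D$ are absorbed by the inserted pair $(h,v)$ (for (vi)). The only difference is organizational — you branch on the four cases of the algorithm and verify all invariants in each, whereas the paper iterates over the invariants and handles all cases within each; the underlying arguments are the same.
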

\begin{proof}
	\newcommand{\fD}{\mathcal{D}}
	At the beginning, after reading the first value $v$ and setting $h = v$, $D \gets \emptyset$, and $A = \{v\}$, all invariants trivially hold.
	
	Now consider a step of the algorithm. Let $\sigma$ be the sequence read before the step, $h, A, D$ be the stored data before the step, let $v$ be the value read in the step, let $\sigma' = \sigma \circ v$, and let $h',A',D'$ be the stored data after the step. We assume all invariants hold for $h, A, D$, and the algorithm does not report an occurrence of 312 in the step.
	\begin{description}
		\itemsep0pt
		\item \ref{inv:h}. Clearly, $h$ is updated correctly by \cref{line:update-h}.
		
		\item \ref{inv:A-exact}. If $v> h$, then $h' = v$ and \cref{line:delete-A,line:insert-A-1} set $A' = \{v\} \cup (A \setminus [v-k]) = \Val(\sigma') \setminus [h'-k]$.
		
		If $v \le h$, then $h' = h$, and \cref{line:insert-A-2} ensures that the invariant continues to hold.
		
		\item \ref{inv:A-no-occ}. Suppose there is an occurrence $a,b,c$ of 312 in $\tau$ such that $a > b > h'-k$ and $a$, $b$ are in~$\sigma'$. If $a,b$ are already in $\sigma$, then \ref{inv:A-no-occ} was violated previously, since $a > b > h'-k \ge h-k$, a contradiction. Thus, we have $b = v$.
		
		Since $h \ge a > b > h'-k = h-k$, the condition in \cref{line:if-into-A} is true. Since \ref{inv:A-exact} holds in the previous step, we have $a \in A$. We also clearly have $c \notin A$, and since $a > c > b$, the algorithm reports a occurrence of 312, a contradiction.
		
		\item \ref{inv:D-corr-high}. The only time a new pair is inserted into $D$ is when $(h,v)$ is inserted in \cref{line:insert-D}. Here, we have $h \ge v+k$, and $h$ must have occurred in $\sigma$ (and thus before $v$ in $\tau$) since \ref{inv:h} holds for $h$.
		
		\item \ref{inv:D-disjoint}. Say we insert $(h,v)$ in \cref{line:insert-D} and let $(a,b) \in D$ with $[b,a] \cap [v,h] \neq \emptyset$. Since $a \le h$, this implies $a > v$. If $b < v$, then we would have stopped already in \cref{line:occ-D}. If $b > v$, then $(a,b)$ is removed in \cref{line:delete-D}.
		
		\item \ref{inv:D-max}. Let $a,b,c$ be an occurrence of 312 in $\tau$ such that $a,b$ are in $\sigma'$. Define $\fD = \bigcup_{(a',b') \in D} [b',a']$, and $\fD' = \bigcup_{(a',b') \in D'} [b',a']$. By \ref{inv:D-disjoint}, it suffices to show that $[b,a] \subseteq \fD'$.
		
		First note that $\fD \subseteq \fD'$, since if we remove a pair $(a',b')$ from $D$ in \cref{line:delete-D}, then we have $h \ge a' > b' > v$ and $(h,v) \in D'$.
		
		If $b \neq v$, then $a,b$ is in $\sigma$ and $[b,a] \in \fD \subseteq \fD'$ since \ref{inv:D-max} holds in the previous step.
		
		Now suppose $b = v$. Note that $h \ge a > v$.
		If $v > h - k$, then also $a > h-k$, so $a \in A$ and $c \notin A$ by \ref{inv:A-exact}, and we report an occurrence of 312 in \cref{line:occ-A} (so we do not need to maintain the invariants after this step). If $v \le h-k$, then $\fD' = \fD \cup [b,h] \supseteq [b,a]$.\qedhere
	\end{description}
\end{proof}

\paragraph{Space usage.}
We now show that the algorithm requires $\fO(\sqrt{n \log n})$ bits of space.
Clearly, storing~$h$ requires $\fO( \log n )$ space. We can store $A$ as a bit-array of size $k$, using $\fO(k)$ space. By \ref{inv:D-disjoint}, there are at most $n/k$ pairs in $D$, so $D$ requires $\fO(\frac{n}{k} \log n)$ bits of space. The total space is thus $\fO(k + \frac{n}{k} \log n) = \fO( \sqrt{n \log n} )$.

\paragraph{Correctness.}
We show that if the algorithm reports an occurrence, then $\tau$ contains the pattern 312.
Let $\sigma$ be the prefix of $\tau$ read so far, and let $v$ be the value currently read.
First, suppose the algorithm reports an occurrence $(a,b,v)$ in \cref{line:occ-D}. By \ref{inv:D-corr-high}, $(a, b)$ is a decreasing pair in $\sigma$, and we have $a > v > b$ from \cref{line:check-D}. Thus, $a,b,v$ is an occurrence of 312 in $\tau$.

Second, suppose the algorithm reports an occurrence $(a,v,c)$ in \cref{line:occ-A}. By \ref{inv:A-exact}, we know that $a$ occurs in $\sigma$, and $c$ does not occur in $\sigma$. That means that $c$ must occur in $\tau$ somewhere after $v$, implying that $a,v,c$ is an occurrence of 312 in $\tau$.

\paragraph{Completeness.}
We now show that if $\tau$ contains the pattern 312, then the algorithm reports an occurrence.
Let $a,b,c$ be the values of an occurrence of 312 in $\tau$. Consider the step where $c$ is read, and suppose no occurrence of 312 was found yet. Let $h,A,D$ be the stored data at the beginning of the step. By invariant \ref{inv:D-max}, there is a pair $(a',b') \in D$ such that $a' \ge a > c > b \ge b'$. Then the algorithm reports the occurrence $(a', b', c)$ in \cref{line:occ-D}.

\subsection{An algorithm for 213-PPM}\label{sec:ppm:algo-213}

We now give a $\fO(\sqrt{n} \log n)$-space algorithm for 231-$\PPM_n$.
Note that this algorithm needs a factor of $\fO(\sqrt{\log n})$ more space than the algorithm from \cref{sec:ppm:algo-312}, and it does not return an occurrence when it accepts.

Let $\tau$ be the input permutation, and $n = |\tau|$. In the following, we assume that instead of receiving a value $v$, we receive a \emph{point} $(i,v) \in [n]^2$, where $i$ is the index of $x$ in $\tau$. This can be easily simulated by keeping a counter of the number of elements read so far, with $\fO(\log n)$ additive space overhead. When $p = (i,v)$, we write $p.x = i$ and $p.y = v$. Let $S$ be the set of all points induced by $\tau$ this way.

\paragraph{Algorithm.}

We partition the input into at most $\ceil{\sqrt n}$ contiguous subsequences of size at most~$\floor{\sqrt{n}}$. We call these subsequences \emph{strips}, and denote the set of points of the $i$-th strip by $S_i$.
We always store all points in the current strip at once, as well as some $\fO(\log n)$-bit data about each of the previous strips. We consider four different types of occurrences of 213, each differing by how the three points are distributed over one or more strips.
The algorithm has four parts that run in parallel, each of which detects one specific type of occurrence.
\begin{enumerate}[(1)]
	\item When we finished reading a strip $S_i$, we first search for 213 in $S_i$ by brute force. This finds all occurrences within a any single strip.
	
	\item To find occurrences $(p,q,r)$ of 213 where $p, q \in S_i$ and $r \notin S_i$, we maintain a single value $L \in [n]$, initially set to $n$. After we finished reading $S_i$, we find the lowest point $p \in S_i$ such that $p,q$ are a decreasing pair for some $q \in S_i$, i.e. $p.x < q.x$ and $q.y < p.y$. If there is such a $p$ and $p.y < L$, we update $L \leftarrow p.y$. Whenever we read a point $p$, if $p.y > L$, we accept immediately.
	
	\item To find occurrences $(p,q,r)$ with $q, r \in S_i$ for some $i$, we use a more sophisticated technique. We will need to store two points $\ell_i, h_i \in [n]^2 \cup \{\bot\}$ and a counter $k_i \in \{0, 1, \dots, h_i - \ell_i - 1\}$ for each strip $S_i$.
	
	Suppose we just finished reading $S_i$. We first define (but do not store) the set $I_i$ of increasing pairs in $S_i$ where some point in a different strip lies vertically between the two points. Formally:
	\begin{align*}
		I_i = \{ (p,q) \in S_i^2 \mid p.x < q.x, \exists r \in S \setminus S_i: p.y < r.y < q.y \},
	\end{align*}
	
	Observe that we can check if $(p,q) \in I_i$ using only $S_i$, as we know exactly which values occur in $S_i$, and, therefore, which occur in the other strips. We can also enumerate the pairs in $I_i$ one-by-one using $\fO(\log n)$ additional space.
	
	If $I_i = \emptyset$, we set $\ell_i = h_i = k_i = \bot$ and are done with this strip. Otherwise, we set $\ell_i$ to be the lowest point such that there is a pair $(\ell_i, q) \in I_i$ and set $h_i$ to be the highest point such that there is a pair $(p, h_i) \in I_i$.
	We initialize $k_i$ to be the number of points $p \in S_i$ where $\ell_i.y < p.y < h_i.y$. From now on, whenever we read a point $p$ with $\ell_i.y < p.y < h_i.y$ in a later strip, we increment $k_i$ by one. Finally, after reading the whole input, we accept if $k_i < h_i - \ell_i - 1$ (recall that we do this for each $i \in [n]$).

	\item Finally, we consider occurrences where all three points are in different strips. We use a ``counter'' technique similar to (3). After reading $S_i$, we compute the lowest point $\ell'_i$.
	Then, we initialize $h'_i$ to be the highest point in $S_i$ to the right of $\ell'_i$ and initialize the counter $k'_i$ to be the number points in $S_i$ above and to the right of $\ell'_i$. In subsequent strips, whenever we read a point $p$ with $\ell'_i.y < p.y$, we increment $k'_i$ by one. Then, if $h'_i.y < p.y$, we set $h'_i \gets p$. Finally, after reading the whole input, we accept if $k'_i < h'_i.y - \ell'_i.y$.
\end{enumerate}

If, at the end of the input, we have not accepted yet due to (1-4), we reject.

\paragraph{Space complexity.}

For each strip $S_i$, we need to store the points $\ell_i$, $h_i$, $\ell'_i$ and $h'_i$ and the values $k_i, k'_i \in [n]$. Additionally, we store $L \in [n]$ and the $\fO(\sqrt{n})$ points in $S_i$. The other operations such as the enumeration of increasing/decreasing pairs only need $\fO(\log n)$ space. Thus, the algorithm requires $\fO(\sqrt{n} \log n)$ space.

\bigskip\noindent
For the proofs of correctness and completeness, we write $L^{(i)}$ for the value of $L$ after processing strip~$S_i$. On the other hand, we only consider the final values of $k_i$ and $h'_i$.

\paragraph{Correctness.}
We show that if the algorithm accepts, then $\tau$ contains 213.

Suppose the algorithm accepts the input $\tau$. If this was due to finding 213 in a single strip via part~(1), then $\tau$ clearly contains 213. If part (2) finds a point $r \in S_i$ with $r.y < L^{(i-1)}$, then, for some $j < i$, the strip $S_j$ contains a decreasing pair of points $(p, q)$ with $p.y = L^{(i-1)} < r.y$, thus $p, q, r$ form an occurrence of 213 in $\tau$.

Now consider the case that (3) accepts, i.e. we have $k_i < h_i - \ell_i - 1$ for some $i \in [n]$ in the end. Then $k_i$ is the number of points $q$ in the strips $S_i, S_{i+1}, \dots, S_n$ with $\ell_i.y < q.y < h_i.y$. As there are $h_i.y - \ell_i.y - 1$ such points in the input, and $k_i < h_i.y - \ell_i.y - 1$, there must be at least one such point $p \in S_j$ for some $j < i$.

By definition, there are $q_\ell, p_h \in S_i$ such that $(\ell_i, q_\ell), (p_h, h_i) \in I_i$. If $\ell_i.x < h_i.x$, then $p, \ell_i, h_i$ form an occurrence of 213 in $\tau$ (see \cref{fig:streaming:213_algo_correctness}, left). Otherwise, we have $p_h.x < h_i.x < \ell_i.x$. By definition of~$I_i$, there is some input point $s \notin S_i$ with $p_h.y < s.y < h_i.y$ (see \cref{fig:streaming:213_algo_correctness}, right). If $s \in S_j$ with $j < i$, then $s, p_h, h_i$ form an occurrence of 213. On the other hand, if $s \in S_j$ for some $i < j$, then $p_h, \ell_i, s$ form an occurrence of 213, as $\ell_i.y < p_h.y$ by minimality of $\ell_i.y$.%

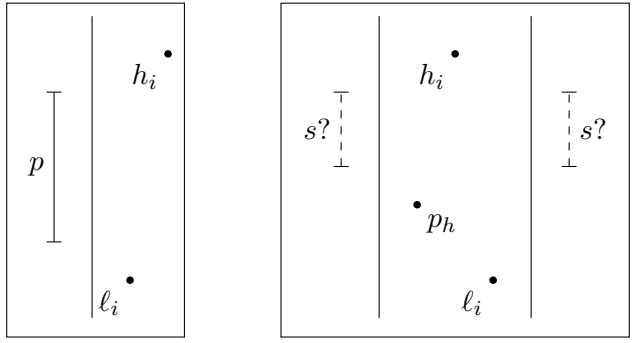
\begin{figure}
	\centering
	\begin{tikzpicture}[
		scale=0.5,
		framed,
		point/.style={fill, circle, inner sep = 1pt}
		]
		\draw[|-|] (1,2) -- (1,6);
		\node[left] at (1,4) {$p$};
		
		\draw (2,0) -- (2,8);
		
		\node[point] at (3,1) {};
		\node[below left] at (3,1) {$\ell_i$};
		
		\node[point] at (4,7) {};
		\node[below left] at (4,7) {$h_i$};
	\end{tikzpicture}
	\hspace{10mm}
	\begin{tikzpicture}[
		scale=0.5,
		framed,
		point/.style={fill, circle, inner sep = 1pt}
		]
		\draw[|-|, dashed] (1,4) -- (1,6);
		\node[left] at (1,5) {$s$?};
		
		\draw (2,0) -- (2,8);
		
		\node[point] at (3,3) {};
		\node[below right] at (3,3) {$p_h$};
		
		\node[point] at (4,7) {};
		\node[below left] at (4,7) {$h_i$};
		
		\node[point] at (5,1) {};
		\node[below left] at (5,1) {$\ell_i$};

		\draw (6,0) -- (6,8);
		
		\draw[|-|, dashed] (7,4) -- (7,6);
		\node[right] at (7,5) {$s$?};
	\end{tikzpicture}
	\caption{Illustration of the correctness of part (3) of the algorithm.}
	\label{fig:streaming:213_algo_correctness}
\end{figure}

Finally, suppose (4) accepts, i.e. $k'_i < h'_i.y - \ell'_i.y$ holds for some $i \in [n]$ in the end. $k_i'$ is the number of points to the right and above $\ell'_i$ (including $h'_i$). There are exactly $h'_i.y - \ell'_i.y - 1$ input points $p$ with $\ell'_i.y < p.y < h'_i.y$, but only $k'_i - 1$ such points that are also to the right of $\ell'_i$, so there must be at least one point $q$ with $q.x < \ell'_i.x < h'_i.x$ and $\ell'_i.y < q.y < h'_i.y$. Thus $(q, \ell'_i, h'_i)$ form an occurrence of 213 in~$\tau$.

\paragraph{Completeness.}
We show that if $\tau$ contains 213, then the algorithm accepts.

Suppose that the input contains three points $p, q, r$ that form an occurrence of 213. If there is some $i$ such that $p, q, r \in S_i$, part (1) will find $(p,q,r)$.

Suppose $p, q \in S_i$ and $r \notin S_i$ for some $i$. We show that then (2) accepts. Let $(p', q')$ be a decreasing pair in $S_i$ such that $p'.y$ is minimal. Let $r \in S_j$. As $i \le (j-1)$, by definition of $L$ and minimality of $p'.y$, we know $L^{(j-1)} \le p'.y \le p.y < r.y$. As such, (2) accepts when the algorithm reads $r$ and notices $L < r.y$.

Suppose $p \notin S_i$ and $q, r \in S_i$ for some $i$. We show that then (3) accepts. We have $\ell_i.y \le q.y < p.y < r.y \le h_i.y$. As $p \in S_j$ for some $j < i$, the number $k_i$ of points $s$ with $\ell_i.y < s.y < h_i.y$ to the right of $S_i$ is less than the $h_i.y - \ell_i.y - 1$ (the total number of such input points). Thus, (3) accepts.

Finally, suppose $p$, $q$ and $r$ are in different strips. We show that then (4) accepts. Let $q \in S_i$. Then $\ell'_i.y < q.y < p.y < r.y < h'_i.y$. As $p \in S_j$ for some $j < i$, there the number $k'_i$ of points $s$ to the right of $\ell'_i$ which satisfy $\ell'_i.y < s.y$ (and, by definition, $s.y \le h'_i.y$) is at most $h'_i.y - \ell'_i.y - 1$. Thus, we accept.

\subsection{Two-party communication complexity}\label{sec:ppm:3-cc}

We now show that any non-constant lower bound for $\pi$-$\PPM_n$ with $\pi \in \{312, 132, 213, 231\}$ is impossible to prove with a straight-forward reduction from two-party communication complexity. Define the \emph{one-way two-party $\pi$-$\PPM_n$} problem as follows. The input is a permutation $\tau$ of~$[n]$, as usual. We have two players, Alice and Bob. Alice receives a prefix of $\tau$ (of any length), and Bob receives the remaining suffix.
The \emph{communication complexity} of one-way two-party $\pi$-$\PPM_n$ is the minimum number of bits of information that Alice needs to send to Bob so that Bob can decide whether $\pi$ is contained in $\tau$.

Clearly, this provides a lower bound for the space complexity of $\pi$-$\PPM_n$, and potentially a stronger one than the \emph{two-way} communication complexity used in the rest of the paper. Unfortunately:

\begin{proposition}\label{p:cc-ppm-easy}
	The communication complexity of one-way two-party $\pi$-$\PPM_n$ is one bit if $|\pi| \le 3$.
\end{proposition}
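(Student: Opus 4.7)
The plan is to exploit two facts: first, $B = \Val(\beta) = [n] \setminus A$ is determined by $A$, so Alice knows the set of values Bob will hold (just not their order), and symmetrically Bob knows $A$; second, an occurrence of $\pi$ in $\tau$ can be classified by the split $(i, |\pi|-i)$ indicating how many of its $|\pi|$ elements fall in $\alpha$ and how many fall in $\beta$. The strategy is to show that when $|\pi| \le 3$, every split type can be verified by one of the two parties unilaterally, so Alice only needs to send Bob a single summary bit.

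For $|\pi|=3$ the possible splits are $(3,0)$, $(2,1)$, $(1,2)$, $(0,3)$. I would argue that Alice can verify splits $(3,0)$ and $(2,1)$ from $\alpha$ and $B$ alone: for $(3,0)$ she simply searches $\alpha$ for an occurrence of $\pi$, while for $(2,1)$ she searches for a pair in $\alpha$ order-isomorphic to $\pi(1)\pi(2)$ together with a value $c \in B$ whose rank among that pair matches $\pi(3)$. This check is legitimate because every element of $B$ occurs after $\alpha$ in $\tau$, so any such $c$ really does appear in the right position. By the symmetric argument, Bob can verify splits $(0,3)$ and $(1,2)$ from $\beta$ and $A$. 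The short cases $|\pi| \in \{1,2\}$ are analogous: the only nontrivial mixed split $(1,1)$ for $|\pi|=2$ can be checked by either party since both know $A$ and $B$.

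The protocol is then: Alice sends the disjunction of her two checks as a single bit, and Bob outputs that bit in disjunction with the outcome of his own two checks. Correctness follows from the case analysis, since the output is $1$ iff some split yields a valid occurrence, iff $\tau$ contains $\pi$. The only subtlety—and the step deserving the most care—is the mixed splits: one must observe that knowing the \emph{set} of the opponent's values (rather than their order) is sufficient precisely because only a single value from the opponent's side participates in the occurrence. This is exactly why the argument stops at $|\pi|=3$: at $|\pi|=4$ the split $(2,2)$ would require each party to know order information about two of the opponent's values, which is why the pattern matching problem becomes genuinely harder in that regime (cf.\ \cref{p:lb-non-mon}).
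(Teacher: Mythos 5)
Your proposal is correct and follows essentially the same approach as the paper: classify occurrences by how many of the $|\pi|$ elements land in Alice's prefix versus Bob's suffix, observe that each party knows the \emph{set} of the other's values, and conclude that for $|\pi| \le 3$ every split is decidable unilaterally by one of the two parties, so Alice sends a single summary bit. Your added remarks on the $(2,1)$ check and on why $(2,2)$ blocks the argument at $|\pi|=4$ are accurate elaborations of the same idea, not a different route.
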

\begin{proof}
	Let $\alpha, \beta$ be the two parts of the input received by Alice and Bob.
	Alice can detect occurrences of $\pi$ that are fully contained in $\alpha$, and, since she knows $\Val(\beta)$, she can also detect occurrences of $\pi$ where all but the last value are contained in $\alpha$. Similarly, Bob can detect all occurrences where all values or all but the first value are contained in $\beta$. If $|\pi| \le 3$, this encompasses all possible occurrences, and hence Alice only needs to send one bit indicating whether she found an occurrence.
\end{proof}

\section{Conclusion}\label{sec:conc}

In this paper we initiated the study of pattern matching in permutation streams. We showed hardness of the problem for almost all non-monotone patterns, but our $\widetilde{\fO}(\sqrt{n})$-space algorithms for the patterns 312, 132, 213, and 231 show a separation from the previously studied \emph{sequence} variant, where a $\Omega(n)$ lower bound holds. We conjecture that our upper bound is tight, which would imply a surprising \emph{trichotomy} of complexities $\Theta(\log n)$, $\widetilde{\Theta}(\sqrt{n})$, and $\widetilde{\Theta}(n)$.

\begin{conjecture}
	The space complexity of $\pi$-$\PPM_n$ is $\widetilde{\Theta}(\sqrt{n})$ for $\pi \in \{312, 132, 213, 231\}$.
\end{conjecture}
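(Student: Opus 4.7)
Since the upper bound $\widetilde{\fO}(\sqrt{n})$ is already established by \cref{p:ub-non-mon-3}, the conjecture reduces to proving a matching lower bound $\widetilde{\Omega}(\sqrt{n})$ for each $\pi \in \{312,132,213,231\}$. By \cref{p:complement}, it suffices to treat, say, 312 and 213. The immediate obstacle is \cref{p:cc-ppm-easy}: the one-way two-party communication complexity of $\pi$-$\PPM_n$ is a single bit, so the $k$-$Q$-hardness framework of \cref{p:cc-lb} with constant $k$ cannot yield any non-trivial bound. Any proof must therefore exploit the fact that the state of a streaming algorithm is read and rewritten many times while information about later parts of the stream is still inaccessible.

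The plan is to attempt a reduction from a multi-party one-way communication problem with $t = \Theta(\sqrt{n})$ players. Concretely, partition the stream into $t$ contiguous blocks of length $\Theta(\sqrt{n})$; a streaming algorithm using $s$ bits of space induces a $t$-party one-way protocol exchanging $s \cdot t$ bits in total, so any multi-party problem with communication complexity $\Omega(n)$ in this model will give the desired $\Omega(\sqrt{n})$ lower bound. Natural candidates are multi-party set disjointness in the number-in-hand model, tree pointer jumping, or a tailored "chained index" problem where each player $i$ receives a bit vector $x_i \in \{0,1\}^{\sqrt{n}}$ and the task is a composition of indexing operations across blocks.

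For pattern 312, the intended encoding would have the $i$-th player's block contribute a structured set of decreasing pairs whose "heights" and "widths" encode $x_i$ --- mirroring the role of the set $D$ maintained by the algorithm of \cref{sec:ppm:algo-312} --- while a designated later block contributes the "query" values that sit inside one of those pairs exactly when the multi-party instance is a YES-instance. For 213, one would instead encode via increasing pairs "waiting" for a higher value to the right, aligned with the invariants (3)--(4) of \cref{sec:ppm:algo-213}. In both cases the reduction must be designed so that each player can construct their block from their own input only, and so that the concatenation is always a valid permutation of $[n]$.

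The hardest step will be precisely this permutation constraint. Partitioning $[n]$ into $t$ disjoint value intervals, one per player, keeps the inputs independent but a priori caps each player's encoded information at $\fO(\sqrt{n} \log n)$ bits, which is still enough; however, it limits which ordering relationships between values of different players can be controlled by the protocol, and it rules out many of the constructions that work for $\SeqPPM$ (as evidenced by \cref{p:lb-non-mon-seq-3}, whose proof crucially repeats or interleaves values in a way no permutation of $[n]$ permits). Finding a hard multi-party problem that survives this restriction --- or, alternatively, ruling out such a reduction and proving the conjecture instead via a direct Yao-style distributional argument over permutations --- is, in our view, where the real difficulty lies and why the conjecture remains open.
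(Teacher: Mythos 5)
This statement is a \emph{conjecture} in the paper, not a theorem with a proof, and your submission --- correctly --- is not a proof either but a roadmap ending with an acknowledged open gap. Your analysis is consistent with the paper's own discussion: the upper bound is \cref{p:ub-non-mon-3}, the reduction to two of the four patterns via \cref{p:complement} is right, and your observation that a straightforward two-party reduction cannot work mirrors \cref{p:cc-ppm-easy}. Your proposed escape hatch --- a $t$-party one-way reduction with $t = \Theta(\sqrt n)$ players, each owning a disjoint value interval so that inputs remain independent despite the permutation constraint --- is a reasonable and somewhat more concrete direction than anything the paper sketches; the paper merely remarks (\cref{sec:ppm:3-cc}, \cref{sec:conc}) that more sophisticated techniques appear necessary and leaves it at that. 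One caveat worth flagging: the disjoint-interval trick restores input independence but may itself trivialize the communication problem, since with fixed value intervals much of what a player ``hides'' from later players is precisely the kind of information that \cref{p:cc-ppm-easy} shows is reconstructible from $\Val(\cdot)$, so it is far from clear that a hard multi-party instance survives this restriction. In short, your proposal identifies the right obstacles and suggests a plausible next step, but --- like the paper --- it does not resolve the conjecture, which remains open.
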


As discussed in \cref{sec:ppm:3-cc}, such a lower bound likely requires some more sophisticated techniques.

It is also interesting that the two algorithms for detecting 312/132 and 213/231 are quite different, even though the patterns are the reverses of each other. Both essentially rely on the fact that each element is known to occur exactly once, in order to circumvent the $\Omega(n)$ bound for $\SeqPPM$. In the first algorithm, this is exploited ``early'': At a certain point (\cref{line:occ-A} in \cref{alg:312}), the algorithm may deduce that a certain element must appear later, and stop right before its internal storage is overwhelmed by the complexity of the input. The second algorithm instead counts elements in certain ranges, and \emph{at the end} deduces that some values must have occurred before the counting started. This means the second algorithm cannot always return an actual occurrence, but only decide whether such an occurrence exists.
We do not know if an occurrence of 213 (or 231) can be \emph{found} with $\widetilde{\Theta}(\sqrt{n})$ space; this is another interesting open question.

As mentioned in the introduction, \emph{counting} patterns is already hard for the pattern $\pi = 21$ (i.e., counting inversions)~\cite{AjtaiEtAl2002}. However, this is not known for counting \emph{modulo 2}; in the case $\pi = 21$, this means computing the \emph{parity} of the input. Reductions from communication complexity are faced with a similar problem as for $\PPM$ with length-3 patterns (\cref{sec:ppm:3-cc}): Two players can compute the parity of a permutation with only one bit of communication.\footnote{This was observed by Dömötör P\'{a}lvölgyi at \url{https://cstheory.stackexchange.com/a/27912} (accessed 2025/7/14).}

Note that counting inversions \emph{modulo 3} can be shown to be hard with a simple modification of the Ajtai--Jayram--Kumar--Sivakumar lower bound\footnote{The \emph{disjointness} problem in the proof needs to be replaced by the \emph{unique disjointness problem}~\cite{Razborov1990UniqueDisjointness}.}~\cite{AjtaiEtAl2002}. Also, computing the parity is hard in the \emph{sequence} setting, as can be shown by a simple reduction from the \CCIndex{} problem.

Our proofs imply that counting some larger patterns is hard even modulo 2: The $\CCDisj$-reductions for size-4 patterns (\cref{sec:disj-hardness}) all have the property that precisely one occurrence appears for each element in the intersection of the two input sets. Thus, by reducing from the \emph{unique disjointness} problem~\cite{Razborov1990UniqueDisjointness} instead, we can show that distinguishing between zero and one occurrence is hard.

Finally, it is known how to \emph{approximately} count inversions efficiently~\cite{AjtaiEtAl2002}; we do not know whether this is possible for larger patterns.

\paragraph{Acknowledgments.}
Many thanks to László Kozma, who initially suggested the topic for the author's Master's thesis~\cite{Berendsohn2019}, and contributed many of the initial ideas. Additional thanks to Paweł Gawrychowski for suggesting the $\sqrt{\log n}$-factor improvement to the 312-$\PPM$ algorithm, and to Michal Opler, Marek Sokołowski, and Or Zamir for useful discussions around this topic throughout the years.

\bibliography{main}
\bibliographystyle{alphaurl}

\end{document}